\renewcommand{\ALG@beginalgorithmic}{\small}
\newtheorem{theorem}{Theorem}[section]
\newtheorem{definition}{Definition}[section]
\newtheorem{remark}{Remark}[section]
\newtheorem{example}{Example}[section]
\newtheorem{lemma}[theorem]{Lemma}
\newtheorem{proposition}[theorem]{Proposition}
\newenvironment{proof}[1][Proof]{\begin{trivlist}
\item[\hskip \labelsep {\bfseries #1}]}{\end{trivlist}}
\newcommand{\qed}{\nobreak \ifvmode \relax \else
      \ifdim\lastskip<1.5em \hskip-\lastskip
      \hskip1.5em plus0em minus0.5em \fi \nobreak
      \vrule height0.75em width0.5em depth0.25em\fi}
\newcounter{MYtempeqncnt}      
\begin{document}
%
% paper title
% can use linebreaks \\ within to get better formatting as desired
\title{Anonymizing Social Graphs via Uncertainty Semantics}

% author names and affiliations
% use a multiple column layout for up to three different
% affiliations
\author{\IEEEauthorblockN{Hiep H. Nguyen, Abdessamad Imine, and Micha\"{e}l Rusinowitch}
\IEEEauthorblockA{LORIA/INRIA Nancy-Grand Est, France}
Email: \{huu-hiep.nguyen,michael.rusinowitch\}@inria.fr, abdessamad.imine@loria.fr
}

% conference papers do not typically use \thanks and this command
% is locked out in conference mode. If really needed, such as for
% the acknowledgment of grants, issue a \IEEEoverridecommandlockouts
% after \documentclass

% for over three affiliations, or if they all won't fit within the width
% of the page, use this alternative format:
% 
%\author{\IEEEauthorblockN{Michael Shell\IEEEauthorrefmark{1},
%Homer Simpson\IEEEauthorrefmark{2},
%James Kirk\IEEEauthorrefmark{3}, 
%Montgomery Scott\IEEEauthorrefmark{3} and
%Eldon Tyrell\IEEEauthorrefmark{4}}
%\IEEEauthorblockA{\IEEEauthorrefmark{1}School of Electrical and Computer Engineering\\
%Georgia Institute of Technology,
%Atlanta, Georgia 30332--0250\\ Email: see http://www.michaelshell.org/contact.html}
%\IEEEauthorblockA{\IEEEauthorrefmark{2}Twentieth Century Fox, Springfield, USA\\
%Email: homer@thesimpsons.com}
%\IEEEauthorblockA{\IEEEauthorrefmark{3}Starfleet Academy, San Francisco, California 96678-2391\\
%Telephone: (800) 555--1212, Fax: (888) 555--1212}
%\IEEEauthorblockA{\IEEEauthorrefmark{4}Tyrell Inc., 123 Replicant Street, Los Angeles, California 90210--4321}}

% use for special paper notices
%\IEEEspecialpapernotice{(Invited Paper)}

% make the title area
\maketitle

\begin{abstract}
Rather than anonymizing social graphs by generalizing them to super nodes/edges or adding/removing nodes and edges to satisfy given privacy parameters, recent methods exploit the semantics of uncertain graphs to achieve privacy protection of participating entities and their relationship. These techniques anonymize a deterministic graph by converting it into an uncertain form. In this paper, we propose a generalized obfuscation model based on uncertain adjacency matrices that keep expected node degrees equal to those in the unanonymized graph. We analyze two recently proposed schemes and show their fitting into the model. We also point out disadvantages in each method and present several elegant techniques to fill the gap between them. Finally, to support fair comparisons, we develop a new tradeoff quantifying framework by leveraging the concept of incorrectness in location privacy research. Experiments on large social graphs demonstrate the effectiveness of our schemes.
\end{abstract}

% IEEEtran.cls defaults to using nonbold math in the Abstract.
% This preserves the distinction between vectors and scalars. However,
% if the conference you are submitting to favors bold math in the abstract,
% then you can use LaTeX's standard command \boldmath at the very start
% of the abstract to achieve this. Many IEEE journals/conferences frown on
% math in the abstract anyway.

% no keywords

% For peer review papers, you can put extra information on the cover
% page as needed:
% \ifCLASSOPTIONpeerreview
% \begin{center} \bfseries EDICS Category: 3-BBND \end{center}
% \fi
%
% For peerreview papers, this IEEEtran command inserts a page break and
% creates the second title. It will be ignored for other modes.
\IEEEpeerreviewmaketitle

\section{Introduction}
\label{sec:introduction}

Graphs represent a rich class of data observed in daily life where entities are described by vertices and their connections are characterized by edges. With the emergence of increasingly complex networks \cite{newman2003structure}, the research community requires large and reliable graph data to conduct in-depth studies. However, this requirement usually conflicts with privacy rights of data contributing entities. Naive approaches like removing user ids from a social graph are not effective, leaving users open to privacy risks, especially re-identification attacks \cite{backstrom2007wherefore} \cite{hay2008resisting}. Therefore, many graph anonymization schemes have been proposed \cite{zhou2008preserving,liu2008towards,zou2009k,cheng2010k,wu2010k,tai2011privacy}. 

Given an unlabeled undirected graph, the existing anonymization methods fall into four main categories. The first category includes \textit{random} addition, deletion and switching of edges to prevent the re-identification of nodes or edges. The methods in the second category provide k-anonymity \cite{sweeney2002k} by \textit{deterministic} edge additions or deletions, assuming attacker's background knowledge regarding certain properties of its target nodes. The methods in the third category assign edge probabilities to add uncertainty to the true graph. The edges probabilities may be computed explicitly as in \cite{boldi2012injecting} or implicitly via random walks \cite{mittal2012preserving}. Finally, the fourth class of techniques, \textit{generalization}, cluster nodes into super nodes of size at least \textit{k}. Note that the last two classes of schemes induce \textit{possible world} models, i.e., we can retrieve sample graphs that are consistent with the anonymized output graph.

The third category is the most recent class of methods which leverage the semantics of edge probability to inject uncertainty to a given deterministic graph, converting it into an uncertain one. Most of schemes in this category are scalable, i.e. runnable on million-scale graphs or more. As an example, Boldi et al. \cite{boldi2012injecting} introduced the concept of \textit{(k,$\epsilon$)-obfuscation} (denoted as ($k,\epsilon$)-obf), where $k \geq 1$ is a desired level of obfuscation and $\epsilon \geq 0$ is a tolerance parameter. However, the pursuit for minimum standard deviation $\sigma$ in (k,$\epsilon$)-obf has high impact on node privacy and high privacy-utility tradeoff. Edge rewiring method based on random walks (denoted as \textit{RandWalk}) in \cite{mittal2012preserving} also introduces uncertainty to edges as we show in section \ref{sec:uni-model}. This scheme suffers from high lower bounds for utility despite its excellent privacy-utility tradeoff. 

Motivated by \textit{(k,$\epsilon$)-obf} and \textit{RandWalk}, we propose in this work a generalized model for anonymizing graphs based on edge uncertainty. Both (k,$\epsilon$)-obf and RandWalk display their fitting into the model. We point out disadvantages in (k,$\epsilon$)-obf and RandWalk, the tradeoff gap between them and present several elegant techniques to fill this gap. Finally, to support fair comparisons, we develop a new tradeoff quantifying framework using the concept of \textit{incorrectness} in location privacy research \cite{shokri2011quantifying}.

Our contributions are summarized as follows:
\begin{itemize}
\item We propose a generalized model called \textit{uncertain adjacency matrix} for anonymizing graph via edge uncertainty semantics (Section \ref{sec:uni-model}). The key property of this model is that expected degrees of all nodes must be unchanged. We show the fitting of \textit{(k,$\epsilon$)-obf} and \textit{RandWalk} into the model and then analyze their disadvantages (Sections \ref{sec:pre}, \ref{sec:uni-model}). 
\item We introduce the \textit{Maximum Variance} (MaxVar) scheme (Section \ref{sec:max-var}) that satisfies all the properties of the uncertain adjacency matrix. It achieves good privacy-utility tradeoff by using two key observations: nearby potential edges and maximization of total node degree variance via a simple quadratic program.
\item Towards a fair comparison for anonymization schemes on graphs, this paper describes a generic quantifying framework (Section \ref{sec:quantify}) by putting forward the distortion measure (also called \textit{incorrectness} in \cite{shokri2011quantifying}) to measure the re-identification risks of nodes. As for the utility score, typical graph metrics \cite{boldi2012injecting} \cite{ying2008randomizing} are chosen. 
\item We conduct a comparative study of aforementioned approaches on three real large graphs and show the effectiveness of our gap-filling solutions (Section \ref{sec:eval}).
\end{itemize}
 
 Table \ref{tab:notation} summarizes notations used in this paper.
 
 \begin{table}[htb]
 \small
 \centering
 \caption{List of notations} \label{tab:notation}
 \begin{tabular}{|c|l|}
 \hline
 \textbf{Symbol} &\textbf{Definition} \\
 \hline
 $G_0=(V,E_{G_0})$ & true graph with $n=|V|$ and $m=|E_{G_0}|$\\
 \hline
 $\mathcal{G}=(V,E,p)$ & uncertain graph constructed from $G_0$\\
 \hline
 $G=(V,E_G)$ & sample graph from $\mathcal{G}$, $G \sqsubseteq \mathcal{G}$ \\
 \hline
 $d_u(G), d_u(\mathcal{G})$ & degree of node $u$ in $G, \mathcal{G}$ \\
 \hline
 $\Delta(d)$ & number of nodes having degree $d$ in $G$ \\
 \hline
 $\mathcal{N}(u)$ & neighbors of node $u$ in $\mathcal{G}$  \\
 \hline
 $R_\sigma$ & truncated normal distribution on [0,1] \\
 \hline
 $r_e \leftarrow R_\sigma$ & a sample from the distribution $R_\sigma$\\  
 \hline
 $p_i$ ($p_{uv}$) & probability of edge $e_i$ ($e_{uv}$)\\
 \hline
 $n_p$ & number of potential edges, $|E|=m+n_p$ \\
 \hline
 $A$, $\mathcal{A}$ & adjacency matrices of $G_0$, $\mathcal{G}$\\
 \hline
 $P_{RW}$ & random walk transition matrix of $G_0$\\
 \hline
 $B^{(t)}$ & uncertain adjacency matrix, $B^{(t)} = AP_{RW}^{t-1}$\\
 \hline
 $t$ & walk length\\
 \hline
 $S$ & switching matrix\\  
 \hline
 $TV$ & total degree variance\\   
 \hline

 \end{tabular}
 \end{table}
 
% %
\section{Related Work}
\label{sec:related}

\subsection{Anonymizing Deterministic Graphs}
There is a vast literature on graph perturbation that deserves a survey. In this section, we enumerate only several groups of ideas that are related to our proposed schemes.
\subsubsection{Anonymizing unlabeled vertices for node privacy}
In unlabeled graphs, node identifiers are numbered in an arbitrary manner after removing their labels. An attacker aims at reidentifying nodes solely based on their structural information. For this line of graphs, node privacy protection implies link privacy. Techniques of adding and removing edges, nodes can be done randomly or deterministically. Random perturbation is a naive approach and usually used as a baseline method. More guided approaches consist of \textit{k-neighborhood}\cite{zhou2008preserving}, \textit{k-degree}\cite{liu2008towards}, \textit{k-automorphism}\cite{zou2009k}, \textit{k-symmetry}\cite{wu2010k}, \textit{k-isomorphism}\cite{cheng2010k} and \textit{$k^2$-degree}\cite{tai2011privacy}. These schemes provide k-anonymity \cite{sweeney2002k} semantics and usually rely on heuristics to avoid combinatorial intractability. K-automorphism, k-symmetry, and k-isomorphism can resist \textit{any} structural attacks by exploiting the inherent symmetry in graph. $k^2$-degree addresses the friendship attacks, based on the vertex degree pair of an edge. Ying and Wu \cite{ying2008randomizing} propose a spectrum preserving approach which wisely chooses edge pairs to switch in order to keep the spectrum of the adjacency matrix not to vary too much. The clearest disadvantage of the above schemes is that they are inefficient on large scale graphs.

Apart from the two above categories, perturbation techniques have other categories that settle on \textit{possible world} semantics. Hay et al. \cite{hay2008resisting} generalize a network by clustering nodes and publish graph summarization of super nodes and super edges. The utility of this scheme is limited. On the other hand, Boldi et al. \cite{boldi2012injecting} take the uncertain graph approach. With edge probabilities, the output graph can be used to generate sample graphs by independent edge sampling. Our approach belongs to this class of techniques with different formulation and better privacy-utility tradeoff. Note that in \textit{k-symmetry}\cite{cheng2010k}, the output sample graphs are also possible worlds of the symmetric intermediate graph.

\subsubsection{Anonymizing labeled vertices for link privacy}
If nodes are labeled, we are only concerned about the link disclosure risk. For example, Mittal et al. \cite{mittal2012preserving} employ an edge rewiring method based on random walks to keep the mixing time tunable and prevent link re-identification by Bayesian inference. This method is effective for social network based systems, e.g. Sybil defense, DHT routing. Link privacy is also described in \cite{ying2008randomizing} for Random Switch, Random Add/Del. Interestingly, \textit{RandWalk} \cite{mittal2012preserving} can also be used for unlabeled graphs as shown in Section \ref{sec:uni-model}.

\subsubsection{Min entropy, Shannon entropy and incorrectness measure}
We now survey commonly used notions of privacy metrics. \textit{Min entropy} \cite{smith2009foundations} quantifies the largest probability gap between the posterior and prior over all items in the input dataset. \textit{K-anonymity} has the same semantics with the corresponding min entropy of $\log_2k$. So we say k-anonymity based perturbation schemes belong to min entropy. Shannon entropy argued in \cite{bonchi2011identity} and \cite{boldi2012injecting} is another choice of privacy metrics. The third metrics that we use in this paper is the \textit{incorrectness} measure from location privacy \cite{shokri2011quantifying}. Given the prior information (e.g. node degree in the true graph) and the posterior information harvested from the anonymized output, incorrectness measure is the number of incorrect guesses made by the attacker. This measure gauges the \textit{distortion} caused by the anonymization algorithm.

\subsection{Mining Uncertain Graphs}
Uncertain graphs pose big challenges to traditional mining techniques. Because of the exponential number of possible worlds, naive enumerations are intractable. Typical graph search operations like k-Nearest neighbor and pattern matching require new approaches \cite{potamias2010k} \cite{zou2010mining} \cite{yuan2011efficient}. Those methods answer threshold-based queries by using pruning strategies based on Apriori property of frequent patterns.

% %
\section{Preliminaries}
\label{sec:pre}
This section starts with definitions and common assumptions on uncertain graphs. It then analyzes vulnerabilities in $(k,\epsilon)$-obf \cite{boldi2012injecting}.

\subsection{Uncertain Graph}
\label{subsec:uncertain-graph}
Let $\mathcal{G} = (V,E,p)$ be an uncertain undirected graph, where $p: E \rightarrow [0,1]$ is the function that gives an existence probability to each edge (see Fig.\ref{fig:ke-graph-obf}). The common assumption is on the \textit{independence} of edge probabilities. Following the \textit{possible-worlds} semantics in relational data \cite{dalvi2007management}, the uncertain graph $\mathcal{G}$ induces a set \{$G=(V,E_G)$\} of $2^{|E|}$ deterministic graphs (worlds), each is defined by a subset of $E$. The probability of $G=(V,E_G) \sqsubseteq \mathcal{G}$ is:

\begin{equation}
Pr(G) = \prod_{e \in E_G} p(e) \prod_{e \in E \setminus E_G} (1 - p(e))
\end{equation}
Note that deterministic graphs are also uncertain graphs with all edges having probabilities 1.

\subsection{$(k,\epsilon)$-obf and Its Limitations} 
\label{subsec:ke-obf}

In \cite{boldi2012injecting}, Boldi et al. extend the concept of \textit{k-obfuscation} developed earlier \cite{bonchi2011identity}.

\begin{definition}
\label{def:ke-obf}
\textit{(k,$\epsilon$)-obf} \cite{boldi2012injecting}. Let P be a vertex property, $k \geq 1$ be a desired level of obfuscation, and $\epsilon \geq 0$ be a tolerance parameter. The uncertain graph $\mathcal{G}$ is said to k-obfuscate a given vertex $v \in G$ with respect to P if the entropy of the distribution $Y_{P(v)}$ over the vertices of $\mathcal{G}$ is greater than or equal to $\log_2k$:\\
\begin{equation}
H(Y_{P(v)}) \geq \log_2k 	\label{eqn:ke}
\end{equation}
The uncertain graph $\mathcal{G}$ is a \textit{$(k,\epsilon)$-obf} with respect to property P if it \textit{k-obfuscates} at least $(1-\epsilon)n$ vertices in $\mathcal{G}$ with respect to P. \hfill $\Box$
\end{definition}

\begin{figure}
	\begin{center}
        \begin{subfigure}[b]{0.13\textwidth}
                \centering
                \epsfig{file=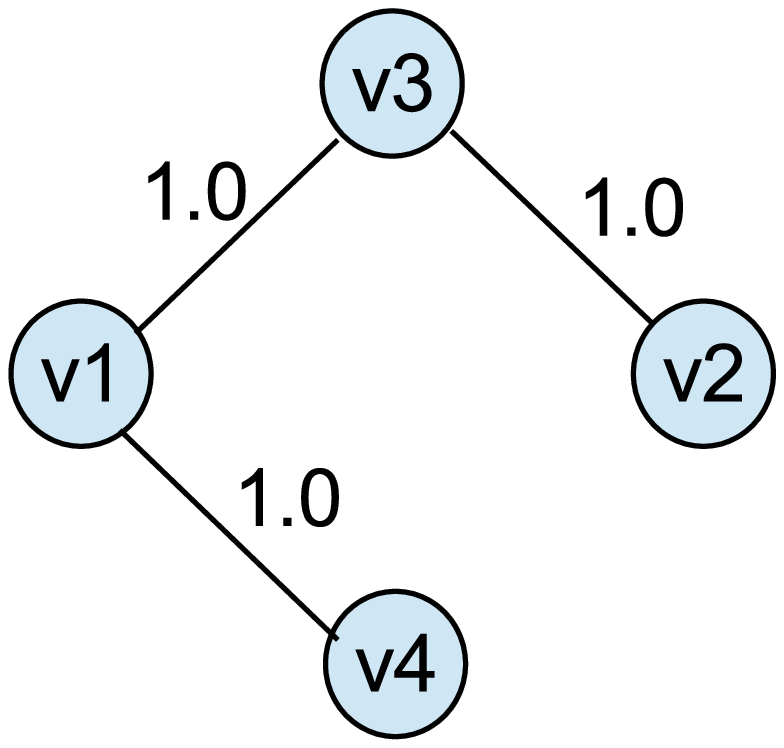, height=1.0in}
                \setlength{\abovecaptionskip}{-10pt}
                \caption{}	
                \label{fig:ke-graph}
        \end{subfigure}
		\begin{subfigure}[b]{0.13\textwidth}
                \centering
                \epsfig{file=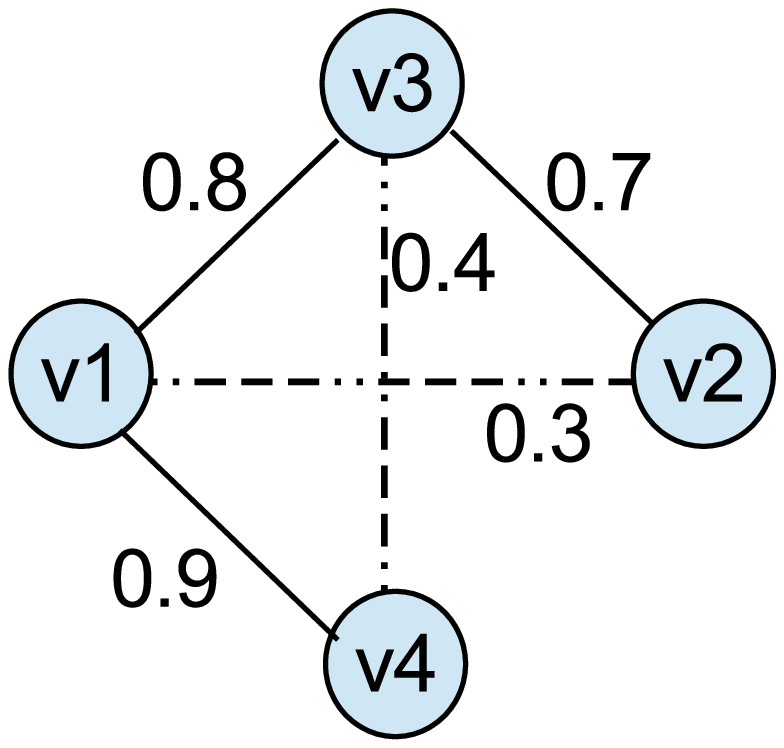, height=1.0in}
                \setlength{\abovecaptionskip}{-10pt}
                \caption{}	
                \label{fig:ke-graph-obf}
        \end{subfigure}        
        \begin{subfigure}[b]{0.2\textwidth}
                \epsfig{file=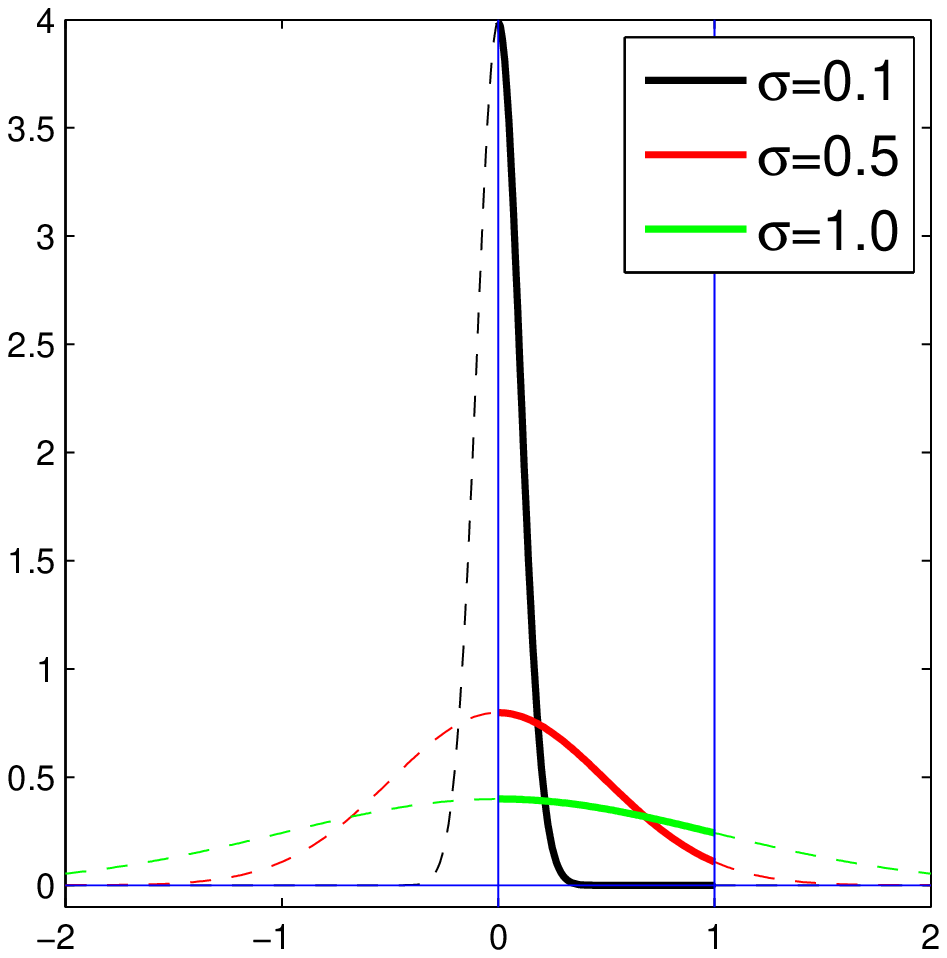, height=1.3in}
                \setlength{\abovecaptionskip}{-10pt}
                \caption{}	
                \label{fig:truncated-normal}
        \end{subfigure}        
    \end{center}	
    \setlength{\abovecaptionskip}{-5pt}
    \caption{(a) True graph (b) An obfuscation with potential edges (dashed) (c) Truncated normal distribution on [0,1] (bold solid curves)}
    \vspace{-1.5em}
    \label{fig:ke-obf}    
\end{figure}

Given the true graph $G_0$ (Fig.\ref{fig:ke-graph}), the basic idea of \textit{$(k,\epsilon)$-obf} (Fig.\ref{fig:ke-graph-obf}) is to transfer the probabilities from existing edges to \textit{potential} (non-existing) edges to satisfy Definition \ref{def:ke-obf}. 
%The edge probability is sampled from the truncated normal distribution $R_\sigma$ (Fig. \ref{fig:truncated-normal}). 
For each existing sampled edge $e$, it is assigned a probability $1-r_e$ where $r_{e} \leftarrow R_\sigma$ (Fig. \ref{fig:truncated-normal}) and for each non-existing sampled edge $e'$, it is assigned a probability $r_{e'} \leftarrow R_\sigma$. 

Table \ref{tab:XY} gives an example of how to compute degree entropy for the uncertain graph in Fig. \ref{fig:ke-graph-obf}. Here vertex property $P$ is the node degree. Each row in the left side is the degree distribution for the corresponding node. For instance, $v1$ has degree $0$ with probability $(1-0.8).(1-0.3).(1-0.9) = 0.014$. The right side normalizes values in each column (i.e. in each degree value) to get distributions $Y_{P(v)}$. The entropy $H(Y_{P(v)})$ for each degree value is shown in the bottom row. Given $k=3, \log_2k=1.585$, then $v1, v3$ with true degree 2 and $v2, v4$ with true degree 1 satisfy (\ref{eqn:ke}). Therefore, $\epsilon = 0$.

\begin{table}[htb]
\small
\centering
\caption{The degree uncertainty for each node (left) and normalized values for each degree (right)} \label{tab:XY}
\begin{tabular}{|r|r|r|r|r|r|r|r|r|}
\hline
& \multicolumn{4}{c|}{node degree uncertainty} & \multicolumn{4}{c|}{$Y_{P(v)}$} \\ 
\hline
 &d=0 &d=1 &d=2 &d=3 & d=0 &d=1 &d=2 &d=3 \\
\hline
v1 & .014 & .188 & .582 & .216 & .044 & .117 & .355 & .491 \\
v2 & .210 & .580 & .210 & .000 & .656 & .362 & .128 & .000 \\
v3 & .036 & .252 & .488 & .224 & .112 & .158 & .298 & .509 \\
v4 & .060 & .580 & .360 & .000 & .187 & .362 & .220 & .000 \\
\hline
\multicolumn{4}{|c|}{} & $H$ & 1.40 & 1.84 & 1.91 & 0.99 \\ 
%\hline
%\multicolumn{5}{c}{} \\
%\end{tabular}
%%\hspace{0.5cm}
%\begin{tabular}{|r|r|r|r|r|}
%\hline
% $Y$&d=0 &d=1 &d=2 &d=3 \\
%\hline
%v1 & .044 & .117 & .355 & .491 \\
%v2 & .656 & .362 & .128 & .000 \\
%v3 & .112 & .158 & .298 & .509 \\
%v4 & .187 & .362 & .220 & .000 \\   
%\hline
%$H$ & 1.40 & 1.84 & 1.91 & 0.99 \\ 
\hline
\end{tabular}
\end{table}

While \textit{$(k,\epsilon)$-obf} provides a novel technique to come up with an uncertain version of the graph, the specific approach in \cite{boldi2012injecting} has two drawbacks. First, it formulated the problem as the minimization of $\sigma$. With small values of $\sigma$, $r_e$ highly concentrates around zero, so existing sampled edges have probabilities nearly 1 and non-existing sampled edges are assigned probabilities almost 0. By the simple rounding technique, the attacker can easily reveal the true graph. Even if the graph owner only publishes sample graphs, the re-identification attacks are still effective as we show in Section \ref{sec:eval}. Note that in \cite{boldi2012injecting}, the found values of $\sigma$ vary in a wide range from $10^{-1}$ to $10^{-8}$. Second, the approach in \cite{boldi2012injecting} does not consider the locality (subgraph) of nodes in selecting pairs of nodes for establishing potential edges. As shown in \cite{fard2012limiting}, \textit{subgraph-wise perturbation} effectively reduces structural distortion. 

%Furthermore, the use of Shannon entropy in (k,$\epsilon$)-obf is not in alignment with \textit{min-entropy} which is conventional in the research on \textit{quantitative information flow} (QIF) \cite{smith2009foundations} and k-anonymity as discussed in Section \ref{sec:related}.

% %
\section{A Generalized Model for Uncertain Graph}
\label{sec:uni-model}
This section introduces a generalized model of graph anonymization via semantics of edge uncertainty. Then we analyze several schemes using this model.

\subsection{A Generalized Model: Uncertain Adjacency Matrix}
\label{subsec:model-overview}

\begin{figure}
	\begin{center}
        \begin{subfigure}[b]{0.33\textwidth}
                \centering
                \includegraphics[height=1.2in]{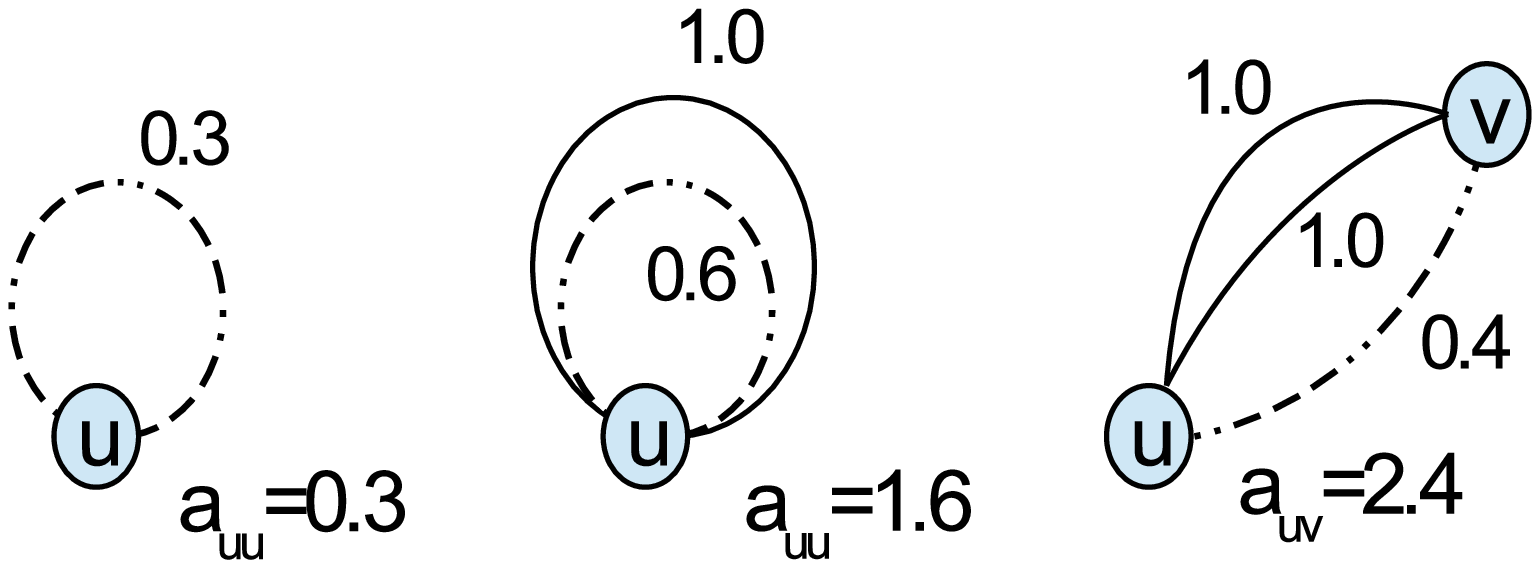}
                \setlength{\abovecaptionskip}{-10pt}
                \caption{}	
                \label{fig:selfloop-multiedge}
        \end{subfigure}
		\begin{subfigure}[b]{0.14\textwidth}
                \centering
                \includegraphics[height=1.2in]{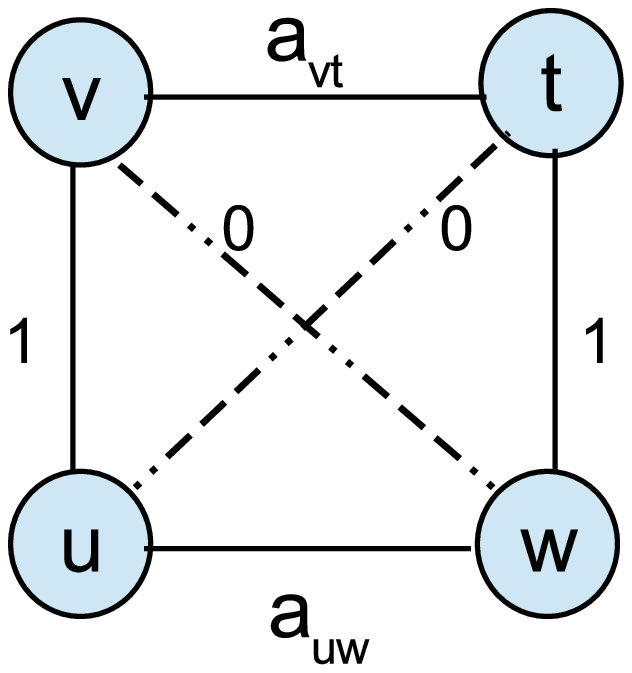}
                \setlength{\abovecaptionskip}{-20pt}
                \caption{}	
                \label{fig:switch}
        \end{subfigure}
    \end{center}	
    \setlength{\abovecaptionskip}{-5pt}
    \caption{(a) Semantics of selfloops (left), multi-selfloops (middle) and multiedges (right) in uncertain adjacency matrix (b) Edge switching}
    \vspace{-1.5em}
\end{figure}

Given the true graph $G_0$, an uncertain graph $\mathcal{G}$ constructed from $G_0$ must have its uncertain adjacency matrix $\mathcal{A}$ satisfying
\begin{enumerate}
\item symmetry $\mathcal{A}_{ij}=\mathcal{A}_{ji}$
\item $\mathcal{A}_{ij} \in [0,1]$ and $\mathcal{A}_{ii}=0$. If we relax this constraint to \hfill \\
(2') allow $\mathcal{A}_{ii} > 0$ then we have \textit{selfloop}s and allow $\mathcal{A}_{ij} > 1$ then we have \textit{multiedge}s (Fig. \ref{fig:selfloop-multiedge}).
\item \textit{expected degrees} of all nodes must be unchanged. It means $\sum_{j=1}^{n} \mathcal{A}_{ij} = d_i(G_0)	 \;\; i=1..n$
\end{enumerate}

%The transition matrices $P_{RW}$ and $P_{MH}$ are (right) \textit{stochastic} (i.e. non-negative and row sums equal to 1)
We first define the transition matrix $P_{RW}$ which is right \textit{stochastic} (i.e. non-negative and row sums equal to 1) as follows (note that we use the short notation $d_i = d_i(G_0)$)

\begin{equation}
 P_{RW}(i,j) = \begin{cases}
        1/d_i  & \text{if } (i,j) \in E_{G_0} \;\; i \ne j\\
        0 & \text{otherwise}.
        \end{cases}
\end{equation}

%\begin{equation}
% P_{MH}(i,j) = \begin{cases}
%        \min\{1/d_i,1/d_j\}  & \text{if } (i,j) \in E \;\; i \ne j\\
%        \sum_{(i,k)\in E} \max\{0,1/d_i-1/d_k\}  & \text{if } i = j\\
%        0 & \text{otherwise}.
%        \end{cases}
%\end{equation}

%Their powers $P_{RW}^t$, $P_{MH}^t$ when $t \rightarrow \infty$ are 
The power $P_{RW}^t$ when $t \rightarrow \infty$ is $P_{RW}^{\infty}(i,j) = \frac{d_j}{2m}$.

%\begin{minipage}{.5\linewidth}
%\begin{equation}
% P_{RW}^{\infty}(i,j) = \frac{d_j}{2|E|}
%\end{equation}
%\end{minipage}%
%\begin{minipage}{.5\linewidth}
%\begin{equation}
% P_{MH}^{\infty}(i,j) = \frac{1}{n}
%\end{equation}
%\end{minipage}

We prove two lemmas on properties of the products $\mathcal{A}P$ and $AP^t$ where $P$ is right stochastic.

\begin{lemma}
\label{lem:expected-deg}
For an adjacency matrix $\mathcal{A}$ and a right stochastic matrix $P$, the product $\mathcal{A}P$ is non-negative and has row sums equal to those of $\mathcal{A}$.
\end{lemma}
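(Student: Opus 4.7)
The plan is to prove both claims by direct unfolding of the definition of matrix multiplication, since nothing deeper is really at play. For non-negativity, I observe that an adjacency matrix $\mathcal{A}$ has entries in $[0,1]$ (hence non-negative) and a right stochastic matrix $P$ is non-negative by definition. Then for any $i,j$,
\begin{equation}
(\mathcal{A}P)_{ij} = \sum_{k=1}^n \mathcal{A}_{ik} P_{kj} \geq 0,
\end{equation}
as a sum of products of non-negative numbers.

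For the row-sum claim, I would swap the order of summation and invoke the stochasticity of $P$. Fixing row $i$,
\begin{equation}
\sum_{j=1}^n (\mathcal{A}P)_{ij} = \sum_{j=1}^n \sum_{k=1}^n \mathcal{A}_{ik} P_{kj} = \sum_{k=1}^n \mathcal{A}_{ik} \sum_{j=1}^n P_{kj} = \sum_{k=1}^n \mathcal{A}_{ik} \cdot 1 = \sum_{k=1}^n \mathcal{A}_{ik},
\end{equation}
where the inner sum collapses to $1$ because $P$ is right stochastic. The right-hand side is exactly the $i$-th row sum of $\mathcal{A}$, which concludes the argument.

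There is no real obstacle here; the statement is essentially the observation that right multiplication by a stochastic matrix preserves row sums (and right multiplication by a non-negative matrix preserves non-negativity). The only thing to be careful about is that the stated property uses the \emph{right} stochasticity of $P$ (row sums equal to $1$), which is precisely what makes the inner sum collapse. This lemma is then immediately useful for property (3) of the uncertain adjacency matrix in the generalized model, since it will let the authors conclude that $\mathcal{A}P$ preserves expected degrees whenever $\mathcal{A}$ does.
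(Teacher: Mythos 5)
Your proof is correct and follows essentially the same route as the paper's: non-negativity from the fact that a product of non-negative matrices is non-negative, and the row-sum claim by interchanging the order of summation and using $\sum_j P_{kj}=1$. The only difference is that you spell out the (trivial) non-negativity step explicitly, which the paper simply declares trivial.
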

\begin{proof}
The non-negativity of $\mathcal{A}P$ is trivial. The sum of row $i$ of $\mathcal{A}P$ is $\sum_{j}(\sum_{k}\mathcal{A}_{ik} P_{kj}) = \sum_{k}\mathcal{A}_{ik}(\sum_{j}P_{kj}) = \sum_{k}\mathcal{A}_{ik}.1 = \sum_{k}\mathcal{A}_{ik}$
\end{proof}

\begin{lemma}
\label{lem:symmetry}
For a deterministic graph $G$ possessing adjacency matrix $A$ and $P_{RW}$, the product $B^{(t)}=AP_{RW}^{t-1}$ is also symmetric.
\end{lemma}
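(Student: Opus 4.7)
The plan is to show that $B^{(t)}$ is similar, after conjugation by $D^{1/2}$, to a power of a symmetric matrix, from which symmetry is immediate. The key observation is that the random walk transition matrix factors as $P_{RW} = D^{-1}A$, where $D = \mathrm{diag}(d_1,\ldots,d_n)$ is the degree matrix of $G$. Introduce the \emph{symmetric} normalized matrix
\[
N \;=\; D^{-1/2}\,A\,D^{-1/2},
\]
which is well-defined whenever $G$ has no isolated vertices (if some $d_i = 0$, the corresponding row/column of $A$ is zero and one restricts to the support). Symmetry of $N$ follows directly from symmetry of $A$ together with symmetry of $D^{-1/2}$.

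Next, I would rewrite $P_{RW}$ as a similarity transform of $N$. Since $D^{-1}A = D^{-1/2}\,(D^{-1/2}AD^{-1/2})\,D^{1/2} = D^{-1/2}\,N\,D^{1/2}$, raising to the $(t-1)$-th power telescopes:
\[
P_{RW}^{\,t-1} \;=\; D^{-1/2}\,N^{\,t-1}\,D^{1/2}.
\]
Multiplying on the left by $A = D^{1/2}\,N\,D^{1/2}$ gives
\[
B^{(t)} \;=\; A\,P_{RW}^{\,t-1} \;=\; D^{1/2}\,N\,D^{1/2}\cdot D^{-1/2}\,N^{\,t-1}\,D^{1/2} \;=\; D^{1/2}\,N^{\,t}\,D^{1/2}.
\]
Since $N$ is symmetric, so is $N^t$, and conjugation by the diagonal (hence symmetric) matrix $D^{1/2}$ preserves symmetry: $(D^{1/2}N^tD^{1/2})^{T} = D^{1/2}(N^t)^{T}D^{1/2} = D^{1/2}N^tD^{1/2}$. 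This proves $B^{(t)} = (B^{(t)})^{T}$.

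I do not anticipate a real obstacle here; the only delicate point is the appearance of $D^{-1/2}$, which requires every vertex to have positive degree. If the graph has isolated vertices, the argument still goes through after restricting to the induced subgraph on vertices of positive degree, since rows and columns of $A$ corresponding to isolated vertices are identically zero and therefore trivially satisfy symmetry in $B^{(t)}$. An alternative, purely inductive route would establish the identity $B^{(t)} = D^{1/2}N^tD^{1/2}$ by induction on $t$, using the base case $B^{(1)} = A = D^{1/2}ND^{1/2}$ and the recurrence $B^{(t+1)} = B^{(t)}P_{RW}$, but the closed-form derivation above seems cleaner.
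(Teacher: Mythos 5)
Your proof is correct, but it takes a genuinely different route from the paper's. You factor $P_{RW}=D^{-1}A$ as a conjugate of the symmetric normalized adjacency matrix $N=D^{-1/2}AD^{-1/2}$, telescope the powers to get the closed form $B^{(t)}=D^{1/2}N^{t}D^{1/2}$, and read off symmetry from the symmetry of $N^{t}$. The paper instead argues by induction on $t$ to derive the explicit path-sum formula $B^{(t)}_{ij}=\sum_{p_t(i,j)}\prod_{k\in p_t(i,j),\,k\ne i,j}1/d_k$, where the sum ranges over walks of length $t$ from $i$ to $j$, and then concludes symmetry because reversing a walk from $i$ to $j$ gives a walk from $j$ to $i$ with the same set of interior vertices and hence the same product. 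Your argument is shorter, more standard, and cleanly isolates the one genuine hypothesis (no isolated vertices, which the paper glosses over); its only cost is that it is a pure existence-of-symmetry argument. The paper's combinatorial formula earns its keep elsewhere: the same path-product expression is reused verbatim in the proof of Proposition~\ref{propos:uniqueness} to show that $P_{RW}$ is the \emph{unique} right stochastic matrix supported on $E_{G_0}$ making all $AP^{t}$ symmetric, by comparing corresponding path products term by term. So both proofs are valid; yours buys concision and transparency, the paper's buys a formula that drives the uniqueness result.
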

\begin{proof}
We prove the result by induction. The case $t=1$ is trivial. We prove that for any $t \geq 2$, $B^{(t)}_{ij} = \sum_{p_t(i,j)} \prod_{k \in p_t(i,j), k\ne i,j} 1/d_k$ where $p_t(i,j)$ is a path of length $t$ from $i$ to $j$. 

When $t=2$, $B^{(2)}_{ij} = \sum_{k} A_{ik}P_{kj} = \sum_{(i,k),(k,j) \in E} 1/d_k$, so the result holds. Assuming that the result is correct up to $t-1$, i.e. $B^{(t-1)}_{ij} = \sum_{p_{t-1}(i,j)} \prod_{k \in p_{t-1}(i,j), k\ne i,j} 1/d_k$. Because $B^{(t)} = B^{(t-1)} P_{RW}$, $B^{(t)}_{ij} = \sum_{l} B^{(t-1)}_{il} P_{lj} = \sum_{l,(l,j)\in E} (\sum_{p_{t-1}(i,l)} \prod_{k \in p_{t-1}(i,l), k\ne i,l} 1/d_k)1/d_l =  \sum_{p_t(i,j)} \prod_{k \in p_t(i,j), k\ne i,j} 1/d_k$.

Because $G$ is undirected, the set of all $p_t(i,j)$ is equal to the set of all $p_t(j,i)$, so $B^{(t)}_{ij}=B^{(t)}_{ji}$.
%When $t=1$, set $B^{(2)} = AP_{RW}$ then $B^{(2)}_{ij} = \sum_{k} A_{ik}P_{kj} = \sum_{(i,k),(k,j) \in E} 1/d_k = \sum_{(j,k),(k,i) \in E} 1/d_k = \sum_{k} A_{jk}P_{ki} = B^{(2)}_{ji}$.
%
%When $t=2$, set $B^{(3)} = AP_{RW}^2$ then $B^{(3)}_{ij} = \sum_{k} B^{(2)}_{ik}P_{kj} = \sum_{(k,j)\in E} B^{(2)}_{ik} 1/d_k = \sum_{(k,j)\in E} (\sum_{(i,l),(l,k)\in E} 1/d_l) 1/d_k = \sum_{(i,k),(k,l),(l,j) \in E} 1/(d_l d_k) = B^{(3)}_{ji}$.
\end{proof}

%Note that this result does not hold for $AP_{MH}^t$ because $AP_{MH}^t$ is not symmetric. 
We prove the uniqueness of $P_{RW}$ in the following proposition.

\begin{proposition}
\label{propos:uniqueness}
Given a deterministic graph $G$ with adjacency matrix $A$, there exists one and only one right stochastic matrix $P$ that satisfies $P_{uv} = 0$ for all $(u,v) \notin G$ and $AP^t$ is symmetric for all $t \geq 0$. The unique solution is $P=P_{RW}$.
\end{proposition}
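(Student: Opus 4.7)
Existence is immediate from Lemma~\ref{lem:symmetry}: the matrix $P_{RW}$ is right stochastic by construction, vanishes off $E_{G_0}$, and by Lemma~\ref{lem:symmetry} makes $AP_{RW}^{t-1}=B^{(t)}$ symmetric for every $t\ge 1$ (with $AP_{RW}^{0}=A$ symmetric by undirectedness of $G_0$). So the substantive part of the claim is uniqueness.

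Let $P$ be any right stochastic matrix with $P_{uv}=0$ for $(u,v)\notin E_{G_0}$ and $AP^t$ symmetric for all $t\ge 0$. My first move is to collapse the infinite family of symmetry hypotheses to the single case $t=1$. Once $AP=(AP)^T=P^T A$, iterating gives
\[
AP^t=(AP)\,P^{t-1}=P^T A P^{t-1}=\cdots=(P^T)^t A=(AP^t)^T,
\]
so the higher-$t$ symmetries come for free. Setting $X:=P-P_{RW}$ and subtracting the analogous identities for $P_{RW}$, the three hypotheses turn into a homogeneous linear system on $X$: (i) $\mathrm{supp}(X)\subseteq E_{G_0}$, (ii) $X\mathbf 1=\mathbf 0$, and (iii) $AX=X^T A$. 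The goal reduces to showing $X=0$.

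For that final step I would expand the $(u,v)$-entry of $AX=X^T A$ as
\[
\sum_{k\in N(u)\cap N(v)}X_{kv}=\sum_{k\in N(u)\cap N(v)}X_{ku},
\]
which couples $X$-entries along length-two walks, and combine it with the vertex-wise equations $\sum_{j\sim i}X_{ij}=0$ to knock out the unknowns edge by edge. A natural attempt is to induct along a spanning tree, starting at a leaf $\ell$ where the row-sum condition immediately forces $X_{\ell,\mathrm{parent}(\ell)}=0$, and then propagating inward using the length-two coupling. A useful global identity, obtained by left-multiplying $AP=P^T A$ by $\mathbf 1^T$, is $d^T P=d^T$: any admissible $P$ shares the stationary distribution $d/(2m)$ with $P_{RW}$, which is the detailed-balance foothold one would invoke if the edge-by-edge elimination needs reinforcement. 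I expect this final elimination to be the main obstacle, because the coupled system has $2|E_{G_0}|$ unknowns and its triviality is sensitive to the local structure of $G_0$, so closing the argument cleanly will likely require combining the length-two coupling, the row sums, and the detailed-balance constraint in a coordinated way.
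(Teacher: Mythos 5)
Your existence half and your reduction of the whole family of hypotheses to the single condition $AP=(AP)^T$ are correct (the identity $AP^t=(P^T)^tA$ is a nice economy the paper does not bother with), and the passage to the homogeneous system (i) $\mathrm{supp}(X)\subseteq E_{G_0}$, (ii) $X\mathbf 1=\mathbf 0$, (iii) $AX=X^TA$ for $X=P-P_{RW}$ is sound. But the argument stops exactly where the content is: you never establish $X=0$, and you yourself flag the elimination as an open obstacle. The paper closes this step by a much blunter move: writing $B^{(2)}_{ij}=\sum_{k\in N(i)\cap N(j)}P_{kj}$ and $B^{(2)}_{ji}=\sum_{k\in N(i)\cap N(j)}P_{ki}$, it asserts that symmetry forces the \emph{corresponding terms} to be equal, i.e. $P_{kj}=P_{ki}$ for every common neighbour $k$ of $i,j$, whence row-stochasticity immediately gives $P_{ki}=1/d_k$. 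Your plan never produces this termwise identity, so as written it does not prove the proposition.

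Your instinct that the triviality of the system ``is sensitive to the local structure of $G_0$'' is in fact the heart of the matter: equality of two sums over $N(i)\cap N(j)$ does not imply termwise equality once $|N(i)\cap N(j)|\geq 2$, and the system (i)--(iii) genuinely admits nonzero solutions on some graphs. On the 4-cycle $1\!-\!2\!-\!3\!-\!4\!-\!1$, every right stochastic $P$ supported on the edges with $P_{12}+P_{32}=1$ and $P_{21}+P_{41}=1$ makes $AP$ (hence every $AP^t$) symmetric, a two-parameter family containing $P_{RW}$. So no combination of the length-two coupling, the row sums, and $d^TP=d^T$ can complete your elimination for arbitrary $G$; it only goes through when neighbours of a vertex can be chained through pairs having a unique common neighbour. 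To finish you would have to either justify the paper's termwise-equality step (which is precisely where its own proof is thinnest) or impose an additional structural hypothesis on $G_0$.
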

\begin{proof}
Lemma \ref{lem:symmetry} shows that $P=P_{RW}$ satisfies $P_{uv} = 0$ for all $(u,v) \notin G$ and $AP^t$ is symmetric for all $t\geq 0$. 

To prove that this is the unique solution, we repeat the formula in the proof of Lemma \ref{lem:symmetry}. Let $B^{(t)} = AP^{t-1}$, then $B^{(t)}_{ij} = \sum_{p_t(i,j)} \prod_{k \in p_t(i,j), k\ne i,j} P_{k,k+1}$ where $k+1$ implies the successive node of $k$ in $p_t(i,j)$. Because $B^{(t)}_{ji}$ has the same number of products as $B^{(t)}_{ij}$ (i.e. the number of paths of length $t$), $B^{(t)}$ is symmetric if and only if corresponding products are equal, i.e. $\prod_{k \in p_t(i,j), k\ne i,j} P_{k,k+1} =  \prod_{k \in p_t(j,i), k\ne i,j} P_{k,k+1}$. At $t=2$, for any path $(i,k,j)$ we must have $P_{kj} = P_{ki}$. Along with the requirement that $P$ is right stochastic, i.e. $\sum_{i}P_{ki}=1$, we obtain $P_{ki}=1/d_k$. This is exactly $P_{RW}$.
\end{proof}

\subsection{RandWalk Approach}
\label{subsec:model-randwalk}

Now we apply the model of uncertain adjacency matrix to the analysis of \textit{RandWalk} \cite{mittal2012preserving}. Algorithm \ref{algo-randwalk} depicts the steps of RandWalk. As we show below, the trial-and-error condition in Line 6 makes RandWalk hard to analyze \footnote{\label{foot:1}It also causes edge miss at $t=2$, e.g. a 2-length walk on edge $(v3,v2)$ (Fig. \ref{fig:ke-graph}) causes the selfloop $(v3,v3)$.}. 
%. It also causes edge miss at $t=2$. 
%, e.g. node $v3$ cannot perform 2-length walk on edge $(v3,v2)$ unless accepting the selfloop $(v3,v3)$. 
So we modify it by removing the condition and using parameter $\alpha$ instead of 1.0 in Line 12 \footnote{This line causes errors for degree-1 nodes as shown in \textit{RandWalk-mod}.} (see Algorithm \ref{algo-randwalk-mod}). When $\alpha=0.5$, all edges $(u,z)$ are assigned with probability 0.5. In RandWalk-mod, we add a checking for $d_u=1$ (Line 8) to keep the total degree of $G'$ equal to that of $G$, which is missing in RandWalk. Note that RandWalk-mod accepts selfloops and multiedges.

Let $Q$ be the \textit{edge adding} matrix defined as 
\begin{equation}
Q_{ij} = \begin{cases}
		0.5 & \text{if } d_i = 1 \wedge j \text{ is the unique neighbor of } i \\
        \alpha  & \text{if } j \text{ is the first neighbor of } i \\
        \frac{0.5d_i - \alpha}{d_i-1} & \text{if } j \text{ is a neighbor of } i \text{ but not the first one}\\
        0 & \text{otherwise}.
        \end{cases} \nonumber
\end{equation}

We show that \textit{RandWalk-mod} can be formulated as an uncertain adjacency matrix $\mathcal{A}_{RW} = (AP_{RW}^{t-1}) \circ (Q + Q^T)$, where $\circ$ is the Hadamard product (element-wise). $AP_{RW}^{t-1}$ is equivalent to computations in lines 2-6 and $Q + Q^T$ is equivalent to computations in lines 7-13. We use $Q + Q^T$ instead of $Q$ due to the fact that when the edge $(u,z)$ is added to $G'$ with probability $Q_{uz}$, the edge $(z,u)$ is also assigned the same probability. We come up with the following theorem.

\begin{theorem}
\label{theorem:randwalk-mod}
\textit{RandWalk-mod} can be formulated as  $\mathcal{A}_{RW} = (AP_{RW}^{t-1}) \circ (Q + Q^T)$. $\mathcal{A}_{RW}$ is symmetric. It satisfies the constraint of unchanged expected degree iff $\alpha=0.5$ \footnote{This implies a mistake in Theorem 3 of \cite{mittal2012preserving}}.
\end{theorem}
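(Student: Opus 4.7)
I would prove the three assertions in order: the matrix identity, symmetry of $\mathcal{A}_{RW}$, and the degree-preservation characterization.

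\emph{Matrix formulation.} I would unroll Algorithm \ref{algo-randwalk-mod}. For every node $u$ and every $j \in \mathcal{N}(u)$, lines 2--6 perform an independent random walk of length $t-1$ starting at $j$; by definition of $P_{RW}$ the walk reaches $v$ with probability $(P_{RW}^{t-1})_{jv}$, so the aggregated walk-weight from $u$ to $v$ is $\sum_j A_{uj}(P_{RW}^{t-1})_{jv} = (AP_{RW}^{t-1})_{uv}$. Lines 7--13 then decide whether to keep each proposed edge, and the three cases in the definition of $Q$ are precisely the three branches of the algorithm (``first neighbor'', subsequent neighbor, degenerate $d_u = 1$). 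Because each unordered pair $\{u,v\}$ may be emitted from $u$'s side with weight $Q_{uv}$ or from $v$'s side with weight $Q_{vu}$ independently, the combined acceptance mask is $Q_{uv} + Q_{vu}$; multiplicative coupling of the two independent phases gives $\mathcal{A}_{RW} = (AP_{RW}^{t-1}) \circ (Q + Q^T)$.

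\emph{Symmetry.} Lemma \ref{lem:symmetry} gives that $AP_{RW}^{t-1}$ is symmetric, and $Q + Q^T$ is trivially symmetric; since the Hadamard product of two symmetric matrices is symmetric, $(\mathcal{A}_{RW})_{ij} = (AP_{RW}^{t-1})_{ij}(Q+Q^T)_{ij} = (AP_{RW}^{t-1})_{ji}(Q+Q^T)_{ji} = (\mathcal{A}_{RW})_{ji}$.

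\emph{Degree condition.} I would compute the expected degree of $u$ as the sum of two contributions: from $u$'s own walks and from other nodes' walks ending at $u$. The contribution from $u$'s walks is $\sum_j A_{uj} Q_{uj}$, which telescopes via $\alpha + (d_u - 1)\cdot(0.5\,d_u - \alpha)/(d_u - 1) = 0.5\,d_u$ (the case $d_u = 1$ gives $0.5$ directly) to $0.5\,d_u$ for every $\alpha$. The contribution from other nodes' walks equals $\sum_w \sum_j A_{wj} Q_{wj} (P_{RW}^{t-1})_{ju}$; swapping the summation order turns it into $\sum_j (P_{RW}^{t-1})_{ju}\cdot \mathrm{colsum}_j(Q)$, and reversibility of $P_{RW}$---i.e.\ detailed balance $d_j (P_{RW}^{t-1})_{ju} = d_u (P_{RW}^{t-1})_{uj}$, which holds because $\pi_v = d_v/(2m)$ is stationary---rewrites this as $d_u \sum_j (P_{RW}^{t-1})_{uj}\cdot \mathrm{colsum}_j(Q)/d_j$. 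This quantity equals $0.5\,d_u$ for every $u$ and every $t$ if and only if $\mathrm{colsum}_j(Q) = 0.5\,d_j$ for each $j$. At $\alpha = 0.5$ the matrix collapses to $Q = 0.5\,A$, whence every column sum equals $0.5\,d_j$ by symmetry of $A$; for any other $\alpha$, a graph with a non-mutual first-neighbor assignment (the path $P_3$ already suffices) produces a column sum different from $0.5\,d_j$, and the identity fails.

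The main obstacle is the only-if direction: the ``first neighbor'' labeling is inherently asymmetric across endpoints, and turning this into a clean contradiction for $\alpha \neq 0.5$ requires either a careful counterexample or a combinatorial argument on how many nodes declare a given vertex as their first neighbor. The row-sum cancellation identity and reversibility of $P_{RW}$ are the key algebraic enablers; the remainder is bookkeeping.
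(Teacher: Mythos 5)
Your symmetry argument coincides with the paper's (Lemma~\ref{lem:symmetry} for $AP_{RW}^{t-1}$, plus the fact that a Hadamard product of symmetric matrices is symmetric), and your reading of lines 2--6 versus lines 7--13 of Algorithm~\ref{algo-randwalk-mod} matches the paper's own, equally informal, justification of the identity $\mathcal{A}_{RW}=(AP_{RW}^{t-1})\circ(Q+Q^T)$. Where you genuinely diverge is the degree condition. The paper applies Lemma~\ref{lem:expected-deg} to get that the row sums of $B^{(t)}_{RW}=AP_{RW}^{t-1}$ equal those of $A$, and then argues entrywise that the mask $Q+Q^T$ preserves these row sums iff all of its non-zero entries equal $1$, which forces $\alpha=0.5$. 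You instead compute the expected degree of $u$ directly from the algorithm as an outgoing part $\sum_{v}Q_{uv}=0.5\,d_u$ (valid for every $\alpha$) plus an incoming part $\sum_j (P_{RW}^{t-1})_{ju}\sum_w Q_{wj}$, and use reversibility of $P_{RW}$ to reduce everything to the column sums of $Q$. This is a genuinely different and in some ways more faithful route: the Hadamard-mask formulation silently re-indexes the acceptance probability from the walk's \emph{starting} neighbor to its \emph{terminal} node, whereas your bookkeeping tracks the algorithm as written and cleanly isolates where $\alpha$ enters (row sums of $Q$ are always $0.5\,d_u$; only the column sums can go wrong).

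However, the only-if direction, which you yourself flag as the obstacle, is broken as stated. Your condition is that the weighted average $\sum_j (P_{RW}^{t-1})_{uj}\,\bigl(\sum_w Q_{wj}\bigr)/d_j$ equal $0.5$ for every $u$. Since $\sum_j\sum_w Q_{wj}=\sum_j\sum_w Q_{jw}=m$, this average equals $0.5$ automatically in the stationary limit $t\to\infty$ for \emph{every} $\alpha$, so the equivalence with ``every column sum of $Q$ equals $0.5\,d_j$'' cannot hold uniformly in $t$ and must be argued for the specific finite $t$ at hand. Worse, your proposed witness $P_3$ fails: on $1$--$2$--$3$ the ratios $\bigl(\sum_w Q_{wj}\bigr)/d_j$ are $\alpha$, $0.5$, $1-\alpha$, and every distribution $(P_{RW}^{t-1})_{u\cdot}$ on this graph averages them to exactly $0.5$ (after an odd number of steps the walk sits at the center; after an even number it splits evenly between the two leaves), so expected degrees are preserved for every $\alpha$. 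The path $P_4$ does work: declaring $1$ the first neighbor of $2$, at $t=2$ node $1$ gets expected degree $0.5+(0.5+\alpha)/2\neq 1$ unless $\alpha=0.5$. So the claim is salvageable, but the counterexample must be chosen with more care than you suggest. (To be fair, the paper's own only-if also leaves a hole: it does not rule out cancellation within a row between entries of $Q+Q^T$ above and below $1$.)
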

\begin{proof}
By Lemmas \ref{lem:expected-deg} and \ref{lem:symmetry}, let $B^{(t)}_{RW}$ be $AP_{RW}^{t-1}$, we have symmetric $B^{(t)}_{RW}$ and its row sums are equal to those of $A$. Because $\mathcal{A}_{RW} = B^{(t)}_{RW} \circ (Q + Q^T)$ and both $B^{(t)}_{RW}$ and $(Q + Q^T)$ are symmetric, $\mathcal{A}_{RW}$ is also symmetric.

Due to the fact that $(Q + Q^T)$ has the same locations of non-zeros as $B^{(t)}_{RW}$, the condition of unchanged expected degree is satisfied if and only if all non-zeros in $(Q + Q^T)$ are 1. This occurs if and only if  $\alpha=0.5$.
\end{proof}

\begin{algorithm}
\caption{RandWalk($G_0,t,M$) \cite{mittal2012preserving}}
\label{algo-randwalk}
\begin{algorithmic}[1]
	\Require undirected graph $G_0$, walk length $t$ and maximum loop count $M$
	\Ensure anonymized graph $G'$
	\State $G'=null$
	\For {$u$ in $G_0$}
		\State $count=1$
		\For {$v$ in $\mathcal{N}(u)$}
			\State $loop=1$
			\While {$(u==z \vee (u,z)\in G') \wedge (loop \leq M)$}
				\State perform $t-1$ hop random walk from $v$
				\State $z$ is the terminal node of the random walk
				\State $loop++$
			\EndWhile
			\If {$loop \leq M$}
				\If {$count==1$}
					\State add $(u,z)$ to $G'$ with probability 1.0 
				\Else
					\State add $(u,z)$ to $G'$ with probability $\frac{0.5d_u - 1}{d_u-1}$
				\EndIf
			\EndIf
			\State $count++$
		\EndFor
	\EndFor
	\Return $G'$
\end{algorithmic}
\end{algorithm}

\begin{algorithm}
\caption{RandWalk-mod($G_0,t,\alpha$)}
\label{algo-randwalk-mod}
\begin{algorithmic}[1]
	\Require undirected graph $G_0$, walk length $t$ and probability $\alpha$
	\Ensure anonymized graph $G'$
	\State $G'=null$
	\For {$u$ in $G_0$}
		\State $count=1$
		\For {$v$ in $\mathcal{N}(u)$}
			\State perform $t-1$ hop random walk from $v$
			\State $z$ is the terminal node of the random walk

			\If {$count==1$}
				\If {$d_u==1$}
					\State add $(u,z)$ to $G'$ with probability 0.5
				\Else
					\State add $(u,z)$ to $G'$ with probability $\alpha$
				\EndIf
			\Else
				\State add $(u,z)$ to $G'$ with probability $\frac{0.5d_u - \alpha}{d_u-1}$
			\EndIf
			\State $count++$
		\EndFor
	\EndFor
	\Return $G'$
\end{algorithmic}
\end{algorithm}

We investigate the limit case when $t \rightarrow \infty$ (i.e. $P_{RW}^{t-1} \rightarrow P_{RW}^{\infty}$). Correspondingly $B_{RW}^{\infty} = A P_{RW}^{\infty}$ has $B_{RW}^{\infty}(i,j) = \frac{d_i d_j}{2m}$. 
%We do not consider $B_{MH}^{\infty}$ because of its asymmetry. 
The following theorem quantifies the number of selfloops and multiedges in $B_{RW}^{\infty}$ for power-law (PL) graphs and sparse Erd\"{o}s-Renyi (ER) random graphs \cite{newman2003structure}.

\begin{theorem}
\label{theorem:limit}
For power-law graphs with the exponent $\gamma$, the number of selfloops in $B_{RW}^{\infty}$ is $\frac{\zeta(\gamma-2)}{\zeta(\gamma-1)}$, where $\zeta(\gamma)$ is the Riemann zeta function defined only for $\gamma > 1$; the number of multiedges is zero.

For sparse ER random graphs with $\lambda=np$ constant where $p$ is the edge probability, the number of selfloops in $B_{RW}^{\infty}$ is $\lambda + 1$; the number of multiedges is zero.
\end{theorem}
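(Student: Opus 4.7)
The plan is to exploit the closed-form $B_{RW}^{\infty}(i,j) = d_i d_j/(2m)$ stated just above the theorem, interpreting the ``number of selfloops'' as the trace sum $\sum_{i=1}^n B_{RW}^{\infty}(i,i)$ and the ``number of multiedges'' as the count of off-diagonal entries that exceed one.

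First I would collapse the selfloop count via
\[
\sum_{i=1}^n \frac{d_i^2}{2m} \;=\; \frac{\sum_i d_i^2}{\sum_i d_i} \;=\; \frac{\langle d^2\rangle}{\langle d\rangle},
\]
reducing the problem to computing the first two moments of the degree distribution under each model. For the power-law model $p(d)=d^{-\gamma}/\zeta(\gamma)$ on $d\geq 1$ (imposing $\gamma>3$ so that the second moment converges, a condition already implicit in the statement since it involves $\zeta(\gamma-2)$), the identities $\langle d\rangle = \zeta(\gamma-1)/\zeta(\gamma)$ and $\langle d^2\rangle = \zeta(\gamma-2)/\zeta(\gamma)$ yield the ratio $\zeta(\gamma-2)/\zeta(\gamma-1)$. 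For sparse Erd\"{o}s-Renyi $G(n,\lambda/n)$, degrees are asymptotically Poisson$(\lambda)$, so $\langle d\rangle=\lambda$ and $\langle d^2\rangle=\lambda+\lambda^2$, whose ratio is $1+\lambda$.

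For multiedges I would show $d_i d_j < 2m$ for all $i\neq j$ asymptotically by bounding $d_i d_j \leq d_{\max}^2$. For sparse ER, the maximum degree is $O(\ln n/\ln\ln n)$, so $d_{\max}^2/(2m)=O((\ln n)^2/(n\lambda))\to 0$. For a PL graph the natural scaling $d_{\max}=O(n^{1/(\gamma-1)})$ gives $d_{\max}^2/(2m)=O(n^{2/(\gamma-1)-1})$, whose exponent is negative exactly when $\gamma>3$---the same regime imposed on the selfloop formula. In both cases no off-diagonal entry of $B_{RW}^{\infty}$ reaches one, so the multiedge count vanishes.

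The main subtlety beyond routine algebra will be cleanly justifying the asymptotic moment computations: the convergence of the zeta-function sums for PL (requiring $\gamma>3$) and the Binomial-to-Poisson passage for ER. Once these standard facts are in place, everything reduces to direct substitution into the explicit formula $B_{RW}^{\infty}(i,j)=d_id_j/(2m)$ together with elementary asymptotic bounds on the degree moments and the maximum degree.
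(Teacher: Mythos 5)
Your proposal is correct and follows essentially the same route as the paper: the selfloop count is computed as the trace $\sum_i d_i^2/(2m)=\langle d^2\rangle/\langle d\rangle$, evaluated via the zeta-function moments for power-law graphs and the Poisson moments for sparse ER, and the absence of multiedges is argued by showing $d_{\max}^2<2m$. The only (minor) divergence is the power-law maximum-degree bound: the paper derives $d_{\max}\leq (n/\zeta(\gamma))^{1/\gamma}<\sqrt{n}<\sqrt{2m}$ from requiring at least one node of degree $d_{\max}$ (valid for $\gamma>2$), whereas your order-statistics scaling $d_{\max}=O(n^{1/(\gamma-1)})$ needs $\gamma>3$ --- a condition that is in any case forced by the convergence of $\zeta(\gamma-2)$.
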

\begin{proof} See Appendix \ref{proof:limit}.
\end{proof}

\begin{remark}
We notice that \textit{RandWalk-mod} can be done equivalently by the idea in SybilGuard \cite{yu2006sybilguard}. We first pick a random permutation $\pi_u$ on neighbors of each node $u$ to get $d_u$ pairs of (in-edge, out-edge). Then for any walk reaching node $u$ by the in-edge $(v,u)$, the out-edge is fixed to $(u,\pi_u(v))$. In this formulation, it is straightforward to verify that the transition probability from $u$ to a neighbor $v$ is $1/d_u(G_0)$.
\end{remark}

\subsection{Edge Switching}
\label{subsec:model-switch}
In edge switching (\textit{EdgeSwitch}) approaches (Fig. \ref{fig:switch}), two edges $(u,v), (w,t)$ are chosen and switched to $(u,t), (w,v)$ if $a_{ut} = a_{wv} = 0$. This is done in $s$ switches. Using the switching matrix $S$, we represent 1-step EdgeSwitch in the form $AS=\mathcal{A}$ (Equation (\ref{eqn:edge-switching})).

The switching matrix $S$ is feasible if and only if $a_{uw}a_{vt} = 0$. Note that in the full form, $S$ is $n\times n$ matrix with the $n-4$ remaining elements on diagonal are 1, other off-diagonal are 0. In general, $S$ is not right stochastic and this happens only when $a_{uw} = a_{vt} = 0$. For $s$-step EdgeSwitch $A\prod_{i=1}^{s} S_i=\mathcal{A}$. If $\forall i, S_i$ is right stochastic (i.e. we choose edges $(u,v),(w,t)$ such that $a_{uw} = a_{vt} = 0$), then Lemma \ref{lem:expected-deg} applies.

\begin{figure*}[!t]
% ensure that we have normalsize text
\normalsize
% Store the current equation number.
\setcounter{MYtempeqncnt}{\value{equation}}
% Set the equation number to one less than the one
% desired for the first equation here.
% The value here will have to changed if equations
% are added or removed prior to the place these
% equations are referenced in the main text.
\setcounter{equation}{3}
\begin{equation}
\label{eqn:edge-switching}
\begin{bmatrix}
0 & 1 & a_{uw} & 0\\
1 & 0 & 0 & a_{vt}\\
a_{uw} & 0 & 0 & 1\\
0 & a_{vt} & 1 & 0
\end{bmatrix}
\begin{bmatrix}
0 & -a_{vt} & 1 & a_{vt}\\
-a_{uw} & 0 & a_{uw} & 1\\
1 & a_{vt} & 0 & -a_{vt}\\
a_{uw} & 1 & -a_{uw} & 0
\end{bmatrix}
=
\begin{bmatrix}
0 & 0 & a_{uw} & 1\\
0 & 0 & 1 & a_{vt}\\
a_{uw} & 1 & 0 & 0\\
1 & a_{vt} & 0 & 0
\end{bmatrix}
\end{equation}
% Restore the current equation number.
%\setcounter{equation}{\value{MYtempeqncnt}}
\setcounter{equation}{4}
% IEEE uses as a separator
\hrulefill
% The spacer can be tweaked to stop underfull vboxes.
\vspace*{2pt}
\end{figure*}

\subsection{Direct Construction}
\label{subsec:model-direct}
Given the deterministic adjacency matrix $A$, we can directly construct $\mathcal{A}$ that satisfies all three constraints (1),(2) and (3) in Section \ref{subsec:model-overview}. \textit{(k,$\epsilon$)-obf} \cite{boldi2012injecting} introduces such an approach. As explained in Section \ref{subsec:ke-obf}, the expected degrees of nodes in $(k,\epsilon)$-obf are \textit{approximately} unchanged due to the fact that $r_e, r_{e'}$ are nearly zero by small $\sigma$. So (k,$\epsilon$)-obf satisfies constraints (1) and (2) but it only approximately satisfies the third constraint.

To remedy this shortcoming, we present the MaxVar approach in Section \ref{sec:max-var}. It adds potential edges to $G_0$, then tries to find the assignment of edge probabilities such that the expected node degrees are unchanged while the total variance is maximized. A comparison among schemes is also shown in the end of Section \ref{subsec:compare-model}.

\subsection{Mixture Approach}
\label{subsec:model-mixture}
In this section, we present the \textit{Mixture} approach by the uncertain adjacency matrix $\mathcal{A}_p$ parametrized by $p$, with the output sample graph $G_p$. Given the true graph $G_0$ and an anonymized $G \sqsubseteq \mathcal{G}$, every edge $(i,j)$ is chosen into $G_p$ with probability $\mathcal{A}_p(i,j)$ where

\begin{equation}
 \mathcal{A}_p(i,j) = \begin{cases}
        1 	& \text{if } (i,j) \in E_{G_0} \cap E_{G}\\
        1-p & \text{if } (i,j) \in E_{G_0} \setminus E_{G}\\
        p 	& \text{if } (i,j) \in E_{G} \setminus E_{G_0}
        \end{cases} \nonumber
\end{equation}

It is straightforward to show that $\mathcal{A}_p = (1-p)A(G_0) + pA(G)$. When applied to $G$ generated by \textit{RandWalk-mod} with $\alpha=0.5$, we have $\mathcal{A}_p = (1-p)A + pAP_{RW}^{t-1} = A[(1-p)I_n + pP_{RW}^{t-1}]$ and $\mathcal{A}_p$ satisfies three constraints (1) (2') and (3).

If there exists $P_{mix}$ with constraint $P_{mix}(i,j) = 0 \text{ if } (i,j) \notin E_{G_0}$ such that $P_{mix}^{t-1} = (1-p)I_n + pP_{RW}^{t-1}$, then Mixture can be simulated by the RandWalk-mod approach with the transition matrix $P_{mix}$.

\subsection{Partition Approach}
Another approach that can apply to RandWalk-mod, $(k,\epsilon)$-obf, MaxVar and EdgeSwitch is the \textit{Partition} approach. Given true graph $G_0$, this divide-and-conquer strategy first partitions $G_0$ into disjoint subgraphs $sG$, then it applies one of the above anonymization schemes on subgraphs to get anonymized subgraphs $s\mathcal{G}$. Finally, it combines $s\mathcal{G}$ to obtain $\mathcal{G}$. Note that the partitioning may cause orphan edges as in MaxVar (Section \ref{sec:max-var}). Those edges must be copied to $\mathcal{G}$ to keep node degrees unchanged.

% %
\section{Maximum Variance Approach}
\label{sec:max-var}
We start this section with the formulation of \textit{MaxVar} in the form of quadratic programming based on two key observations. Then we describe the anonymization algorithm.

\subsection{Formulation}
Two key observations underpinning the MaxVar approach are presented as follows.
\subsubsection{Observation \#1: Maximum Degree Variance}
We argue that efficient countermeasures against structural attacks should hinge on node degrees. If a node and its neighbors have their degrees changed, the re-identification risk is reduced significantly. Consequently, instead of replicating local structures as in k-anonymity based approaches \cite{zhou2008preserving,liu2008towards,zou2009k,cheng2010k,wu2010k,tai2011privacy}, we can deviate the attacks by changing node degrees \textit{probabilistically}. For example, node \textit{v1} in Fig.\ref{fig:ke-graph} has degree 2 with probability 1.0 whereas in  Fig.\ref{fig:ke-graph-obf}, its degree gets four possible values $\{0,1,2,3\}$ with probabilities $\{0.014, 0.188, 0.582, 0.216\}$ respectively. Generally, given edge probabilities of node $u$ as $p_1, p_2,..p_{d_u(\mathcal{G})}$, the degree of $u$ is a sum of independent Bernoulli random variables, so its expected value is $\sum_{i=1}^{d_u(\mathcal{G})} p_i$ and its variance is $\sum_{i=1}^{d_u(\mathcal{G})} p_i(1-p_i)$. If we naively target the maximum (local) degree variance without any constraints, the naive solution is at $p_i = 0.5 \;\forall i$. However, such an assignment distorts graph structure severely and deteriorates the utility. Instead, by following the model of uncertain adjacency matrix, we have the constraint $\sum_{i=1}^{d_u(\mathcal{G})} p_i = d_u(G_0)$. Note that the \textit{minimum variance} of an uncertain graph is 0 and corresponds to the case $\mathcal{G}$ has all edges being deterministic, e.g. when $\mathcal{G} = G_0$ and in switching-edge based approaches. In the following section, we show an interesting result relating the \textit{total} degree variance with the variance of edit distance.

\subsubsection{Variance with edit distance}
The \textit{edit distance} between two deterministic graphs $G, G'$ is defined as:

\begin{equation}
D(G,G') = |E_G\setminus E_{G'}| + |E_{G'}\setminus E_G|
\end{equation}

A well-known result about the expected edit distance between the uncertain graph $\mathcal{G}$ and the deterministic graph $G \sqsubseteq \mathcal{G}$ is

\begin{equation}
E[D(\mathcal{G},G)] = \sum_{G' \sqsubseteq \mathcal{G}} Pr(G')D(G,G') = \sum_{e_i \in E_G} (1-p_i) + \sum_{e_i \notin E_G} p_i \nonumber
\end{equation}

Correspondingly, the variance of edit distance is
\begin{equation}
Var[D(\mathcal{G},G)] = \sum_{G' \sqsubseteq \mathcal{G}} Pr(G')[D(G,G')-E[D(\mathcal{G},G)]]^2 \nonumber
\end{equation}

We prove in the following theorem that the variance of edit distance is the sum of all edges' variance (total degree variance) and it does not depend on the choice of $G$.

\begin{theorem}
\label{theorem:edit-distance}
Assume that $\mathcal{G}(V,E,p)$ has $k$ uncertain edges $e_1,e_2,...,e_k$ and $G  \sqsubseteq \mathcal{G}$ (i.e. $E_G \subseteq E$). The edit distance variance is $Var[D(\mathcal{G},G)] = \sum_{i=1}^{k} p_i(1-p_i)$ and does not depend on the choice of $G$.
\end{theorem}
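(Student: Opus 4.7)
The plan is to reduce the edit distance to a sum of independent Bernoulli-type indicators, one per uncertain edge, and then exploit independence to split the variance into a sum of per-edge variances that turn out to be insensitive to the choice of $G$.

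First I would introduce, for each uncertain edge $e_i \in E$, the indicator random variable $X_i$ equal to $1$ if $e_i$ appears in a world sampled from $\mathcal{G}$ and $0$ otherwise. By the standard independence assumption on uncertain graphs, the $X_i$ are mutually independent with $X_i \sim \mathrm{Bernoulli}(p_i)$. For the fixed reference graph $G \sqsubseteq \mathcal{G}$, I would also fix the deterministic $\{0,1\}$-values $y_i = \mathbf{1}[e_i \in E_G]$.

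Next I would rewrite the edit distance as
\begin{equation}
D(\mathcal{G},G) \;=\; \sum_{i=1}^{k} \bigl( X_i(1-y_i) + (1-X_i)\,y_i \bigr) \;=\; \sum_{i=1}^{k} Z_i,
\end{equation}
where $Z_i := |X_i - y_i|$. The key observation is that $Z_i$ is itself a Bernoulli variable: if $y_i = 0$ then $Z_i = X_i$, and if $y_i = 1$ then $Z_i = 1 - X_i$. In either case $Z_i$ takes values in $\{0,1\}$ with $\mathrm{Var}(Z_i) = p_i(1-p_i)$, since the variance of a Bernoulli($p$) variable equals that of its complement. Because the $X_i$ are independent, so are the $Z_i$, and therefore
\begin{equation}
\mathrm{Var}[D(\mathcal{G},G)] \;=\; \sum_{i=1}^{k}\mathrm{Var}(Z_i) \;=\; \sum_{i=1}^{k} p_i(1-p_i).
\end{equation}
This expression involves only the edge probabilities $p_i$ and not the indicators $y_i$, which proves the claimed independence from the choice of $G$.

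There is no real obstacle here; the only point to state carefully is the invariance of $\mathrm{Var}(Z_i)$ under the flip $Z_i \mapsto 1 - Z_i$, which is what eliminates the dependence on $y_i$ (and hence on $G$). The argument generalizes verbatim if one allows $G$ to range over arbitrary deterministic graphs on $V$ (not just subgraphs of $\mathcal{G}$), since edges outside $E$ contribute deterministic terms of variance zero.
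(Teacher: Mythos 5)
Your proof is correct, and it takes a genuinely different route from the paper's. The paper proves the claim by induction on the number of uncertain edges $k$: it splits the sum defining $Var[D(\mathcal{G}_k,G_k)]$ according to whether $e_k$ belongs to the sampled world, relates each piece to the $(k-1)$-edge variance shifted by $\pm(1-p_k)$ or $\pm p_k$, and closes the induction with a case analysis on whether $e_k \in G_k$. Your argument instead writes $D(\mathcal{G},G)=\sum_i Z_i$ with $Z_i=|X_i-y_i|$, observes that each $Z_i$ is Bernoulli with variance $p_i(1-p_i)$ regardless of $y_i$ (the flip $Z_i\mapsto 1-Z_i$ preserves variance), and invokes independence of the $Z_i$ to sum the variances. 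This is shorter, makes the source of the $G$-invariance transparent (it is exactly the variance-invariance under complementation of a Bernoulli variable), gives the expectation formula as a free by-product, and, as you note, extends verbatim to reference graphs $G$ that are not subgraphs of $\mathcal{G}$. The paper's induction buys nothing extra here; your decomposition is the cleaner argument, and the only hypothesis it leans on --- mutual independence of edge occurrences --- is the same assumption the paper's proof implicitly uses when factoring $Pr(G'_k)$ as $p_k Pr(G'_{k-1})$ and $(1-p_k)Pr(G'_{k-1})$.
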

\begin{proof} See Appendix \ref{proof:edit-distance}.
\end{proof}

%If we assign a weight $w_i$ to each edge $e_i$ and consider the random variable $W(G) = \sum_{e_i \in G} w_i$, then the computations of the mean and variance of $W(G)$ are straightforward due to linearity of expectation and the fact of edge independence. Each edge is a Bernoulli random variable with mean $w_i p_i$ and variance $w_i^2 p_i(1-p_i)$.
%
%\begin{minipage}{.45\linewidth}
%\begin{equation}
%E[W(G)] = \sum_{i=1}^{k} w_i p_i \nonumber
%\end{equation}
%\end{minipage}%
%\begin{minipage}{.45\linewidth}
%\begin{equation}
%Var[W(G)] = \sum_{i=1}^{k} w_i^2 p_i(1-p_i) \nonumber
%\end{equation}
%\end{minipage}
%
%For weighted edge sum random variable, we do not use any distance function. The function $d(P,Q) = |W(P)-W(Q)|$ is not \textit{metric} because we can have $d(P,Q) = 0$ (i.e. $W(P) = W(Q)$) for $P \neq Q$. Note that when we set all $w_i = 1$, we come up with the same formula of variance as in the case of edit distance. The mean, however, is different.

\subsubsection{Observation \#2: Nearby Potential Edges}
\label{subsubsec:ob-2}
As indicated by Leskovec et al. \cite{leskovec2007graph}, real graphs reveal two temporal evolution properties: \textit{densification power law} and \textit{shrinking diameters}. Community Guided Attachment (CGA) model \cite{leskovec2007graph}, which produces densifying graphs, is an example of a hierarchical graph generation model in which the linkage probability between nodes decreases as a function of their relative distance in the hierarchy. With regard to this observation, $(k,\epsilon)$-obf, by heuristically making potential edges solely based on node degree discrepancy, produces many inter-community edges. Shortest-path based statistics will be reduced due to these edges. MaxVar, in contrast, tries to mitigate the structural distortion by proposing only \textit{nearby} potential edges before assigning edge probabilities. Another evidence is from \cite{vazquez2003growing} where Vazquez analytically proved that \textit{Nearest Neighbor} can explain the power-law for degree distribution, clustering coefficient and average degree among the neighbors. Those properties are in very good agreement with the observations made for social graphs. Sala et al. \cite{sala2010measurement} confirmed the consistency of Nearest Neighbor model in their comparative study on graph models for social networks.

\subsection{Algorithms}
\label{subsec:max-var-algo}
This section describes the steps of \textit{MaxVar} to convert the input deterministic graph into an uncertain one. 

\subsubsection{Overview}
The intuition behind the new approach is to formulate the perturbation problem as a \textit{quadratic programming} problem. Given the true graph $G_0$ and the number of potential edges allowed to be added $n_p$, the scheme has three phases. The first phase tries to partition $G_0$ into $s$ subgraphs, each one with $n_s = n_p/s$ potential edges connecting nearby nodes (with default distance 2, i.e. \textit{friend-of-friend}). The second phase formulates a quadratic program for each subgraph with the constraint of unchanged node degrees to produce the uncertain subgraphs $s\mathcal{G}$ with maximum edge variance. The third phase combines the uncertain subgraphs $s\mathcal{G}$ into $\mathcal{G}$ and publishes several sample graphs. The three phases are illustrated in Fig. \ref{fig:max-var}. 

By keeping the degrees of nodes in the perturbed graph, our approach is similar to the \textit{edge switching} approaches (e.g.\cite{ying2008randomizing}) but ours is more subtle as we do it implicitly and the switching occurs not necessarily on pairs of edges.

\begin{figure}
\centering
\includegraphics[height=2.0in]{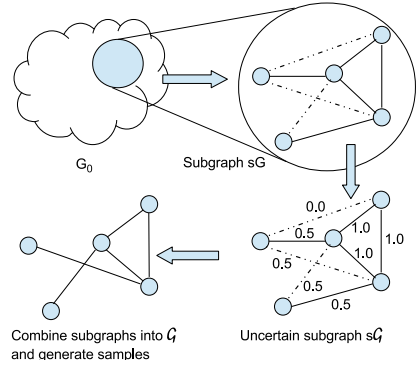}
\vspace{-1em}
\caption{MaxVar approach}
\vspace{-1.5em}
\label{fig:max-var}
\end{figure}

\subsubsection{Graph Partitioning}
Because of the complexity of exact quadratic programming (Section \ref{subsec:quad-prog}), we need a pre-processing phase to divide the true graph $G_0$ into subgraphs and run the optimization on each subgraph. Given the number of subgraphs $s$, we run \textit{METIS} \footnote{http://glaros.dtc.umn.edu/gkhome/views/metis} to get almost equal-sized subgraphs with minimum number of inter-subgraph edges. Each subgraph has $n_s$ potential edges added before running the quadratic program. This phase is outlined in Algorithm \ref{algo-partitioning}.

\begin{algorithm}
\caption{Partition-and-Add-Edges}
\label{algo-partitioning}
\begin{algorithmic}[1]
	\Require true graph $G_0=(V,E_{G_0})$, number of subgraphs $s$, number of potential edges per subgraph $n_s$
	\Ensure list of augmented subgraphs $gl$
	\State $gl \leftarrow$ \texttt{METIS}($G_0, s$).
	\For {$sG$ in $gl$}
		\State $i \leftarrow 0$
		\While {$i < n_s$}
			\State randomly pick $u,v \in V_{sG}$ and $(u,v) \notin E_{sG}$ with $d(u,v)=2$
			\State $E_{sG} \leftarrow E_{sG} \cup (u,v)$
			\State $i \leftarrow i+1$
		\EndWhile
	\EndFor
	\Return $gl$
\end{algorithmic}
\end{algorithm}

\subsubsection{Quadratic Programming}
\label{subsec:quad-prog}
By assuming the independence of edges, the total degree variance of $\mathcal{G} = (V,E,p)$ for edit distance (Theorem \ref{theorem:edit-distance}) is: 

\begin{equation}
\label{eqn:var}
Var(E) = \sum_{i=1}^{|E|} p_i(1-p_i) = |E_{G_0}| - \sum_{i=1}^{|E|} p_i^2
\end{equation}

The last equality in (\ref{eqn:var}) is due to the constraint that the expected node degrees are unchanged (i.e. $\sum_{i=1}^{d_u(\mathcal{G})} p_i = d_u(G_0)$), so $\sum_{i=1}^{|E|} p_i$ is equal to $|E_{G_0}|$. By targeting the maximum edge variance, we come up with the following quadratic program.

\begin{equation*}
\begin{aligned}
& \text{Minimize} & \sum_{i=1}^{|E|} p_i^2 \\
& \text{Subject to} & 0 \leq p_i \leq 1 \; \forall i \\
& & \sum_{v \in \mathcal{N}(u)} p_{uv} = d_u(G_0) \; \forall u
\end{aligned}
\end{equation*}

The objective function reflects the privacy goal (i.e. sample graphs do not highly concentrate around the true graph) while the expected degree constraints aim to preserve the utility.

By dividing the large input graph into subgraphs, we solve independent quadratic optimization problems. Because each edge belongs to at most one subgraph and the expected node degrees in each subgraph are unchanged, it is straightforward to show that the expected node degrees in $\mathcal{G}$ are also unchanged. We have a proposition on problem feasibility and an upper bound for the total variance.

\begin{proposition}
\label{propos:max-var}
The quadratic program in \textit{MaxVar} is always feasible. The total variance $TV_{MaxVar}=Var(E)$ is upper bounded by $\frac{mn_p}{m+n_p}$.
\end{proposition}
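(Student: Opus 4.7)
The proposition has two parts: feasibility of the quadratic program and an upper bound on its optimal value. I will dispose of feasibility by exhibiting an explicit solution, and then obtain the upper bound by combining the identity $Var(E) = m - \sum_i p_i^2$ with a Cauchy--Schwarz lower bound on $\sum_i p_i^2$.

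For feasibility, the obvious candidate is the assignment that mimics $G_0$ itself, namely $p_e = 1$ for every $e \in E_{G_0}$ and $p_e = 0$ for every potential edge. This respects the box constraints $0 \le p_e \le 1$, and the degree constraint $\sum_{v \in \mathcal{N}(u)} p_{uv} = d_u(G_0)$ reduces to counting the true neighbors of $u$, which is correct by construction. Hence the feasible region is nonempty.

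For the upper bound, I would start from the expression $Var(E) = m - \sum_{i=1}^{|E|} p_i^2$ derived in (\ref{eqn:var}). Summing the degree constraints over all vertices and using the handshake lemma,
\begin{equation*}
2 \sum_{i=1}^{|E|} p_i \;=\; \sum_{u \in V} \sum_{v \in \mathcal{N}(u)} p_{uv} \;=\; \sum_{u \in V} d_u(G_0) \;=\; 2m,
\end{equation*}
so any feasible $p$ satisfies $\sum_i p_i = m$. Applying Cauchy--Schwarz (equivalently, the power--mean inequality) on the $|E| = m + n_p$ variables yields
\begin{equation*}
\sum_{i=1}^{|E|} p_i^2 \;\ge\; \frac{\bigl(\sum_i p_i\bigr)^2}{|E|} \;=\; \frac{m^2}{m+n_p}.
\end{equation*}
Substituting back gives $Var(E) \le m - \frac{m^2}{m+n_p} = \frac{mn_p}{m+n_p}$, which is the claimed bound.

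There is no real obstacle here: the only thing to be careful about is that the Cauchy--Schwarz step uses only the equality constraint $\sum_i p_i = m$, which follows from the per-vertex constraints and handshake, and not any extra structural assumption. Note also that the bound is tight in the unconstrained relaxation (attained at the uniform assignment $p_i = m/(m+n_p)$), though such a uniform $p$ need not be feasible vertex-by-vertex; the proposition asserts only an upper bound, so this is fine.
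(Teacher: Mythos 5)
Your proof is correct. Both parts match the paper in substance: the feasibility witness ($p_e=1$ on true edges, $p_e=0$ on potential edges) is exactly the paper's, and the bound rests on the identity $Var(E)=m-\sum_i p_i^2$ plus Cauchy--Schwarz. The route to the bound differs slightly, though. The paper applies Cauchy--Schwarz twice: first locally at each vertex, giving $\sum_{v\in\mathcal{N}(u)}p_{uv}^2\ge d_u^2/(d_u+k_u)$ with $k_u$ the number of potential edges at $u$, and then again (in Engel/Titu form) over the vertices to aggregate $\frac{1}{2}\sum_u d_u^2/(d_u+k_u)\ge \frac{(\sum_u d_u)^2}{2\sum_u(d_u+k_u)}=\frac{m^2}{m+n_p}$. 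You instead collapse the per-vertex constraints into the single global identity $\sum_i p_i=m$ (via the handshake lemma, which the paper itself notes right after its equation (\ref{eqn:var})) and apply Cauchy--Schwarz once over all $m+n_p$ variables. Your version is shorter and reaches the same final bound; the paper's version passes through the strictly sharper intermediate bound $m-\frac{1}{2}\sum_u d_u^2/(d_u+k_u)$, which retains information about how the potential edges are distributed across vertices and could be of independent interest, but is then relaxed to the same $\frac{mn_p}{m+n_p}$. Your closing remark about tightness of the unconstrained relaxation at the uniform assignment is a sensible sanity check and does not affect validity.
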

\begin{proof}
The feasibility is due to the fact that $\{p_e|p_e = 1 \; \forall e \in E_{G_0} \text{ and } p_e = 0 \text{ otherwise}\}$ is a feasible point.
Let $k_u$ be the number of potential edges incident to node $u$. By requiring $u$'s expected degree to be unchanged, we have $\sum_{v \in \mathcal{N}(u)} p_{uv} = d_u$. Applying Cauchy-Schwarz inequality, we get $\sum_{v \in \mathcal{N}(u)} p_{uv}^2 \geq \frac{1}{d_u + k_u} (\sum_{v \in \mathcal{N}(u)} p_{uv})^2 = \frac{d_u^2}{d_u + k_u}$. Now we take the sum over all nodes to get the following
\begin{multline*}
Var(E) = m - \sum_{i=1}^{m+n_p} p_i^2 = m - \frac{1}{2}\sum_{u} \sum_{v \in \mathcal{N}(u)} p_{uv}^2 \\
\leq m - \frac{1}{2}\sum_{u}\frac{d_u^2}{d_u+d_k} \leq m - \frac{1}{2}\frac{(\sum_{u}d_u)^2}{\sum_{u}(d_u+k_u)} = \frac{mn_p}{m+n_p} \nonumber
% = m-\frac{1}{2}\frac{4m^2}{2m+2n_p} 
\end{multline*}
where the last equality is again due to Cauchy-Schwarz inequality.
\end{proof}

\subsection{Comparison of schemes}
\label{subsec:compare-model}
Table \ref{tab:model-compare} shows the comparison of schemes we investigate in this work. Only MaxVar and EdgeSwitch satisfy all three properties (1),(2) and (3). The next two propositions quantify the TV of $(k,\epsilon)$-obf and \textit{RandWalk-mod}.

\begin{table}[htb]
 \small
 \centering
 \caption{Comparison of schemes} \label{tab:model-compare}
 \begin{tabular}{|l|c|c|c|c|}
 \hline
 \textbf{Scheme} & Prop \#1 & Prop \#2 & Prop \#3 & Uncertain $\mathcal{A}$\\
 \hline
 \textit{RandWalk-mod} & $\circ$ ($\alpha=0.5$) & $\times$ & $\circ$ & $\circ$\\
 \hline
 \textit{RandWalk} & $\circ$ & $\circ$ & $\times$ & $\circ$\\
 \hline
 \textit{EdgeSwitch} & $\circ$ & $\circ$ & $\circ$ & $\times$\\
 \hline
 $(k,\epsilon)$\textit{-obf} & $\circ$ & $\circ$ & $\times$ & $\circ$\\  
 \hline
 \textit{MaxVar} & $\circ$ & $\circ$ & $\circ$ & $\circ$\\
 \hline
 \textit{Mixture} & \multicolumn{4}{c|}{depends on the mixed scheme} \\
 \hline
 \textit{Partition} & \multicolumn{4}{c|}{depends on the scheme used in subgraphs} \\ 
 \hline

 \end{tabular}
 \end{table}

\begin{proposition}
\label{propos:obf-var}
The expected total variance of $(k,\epsilon)$-obf $TV_{obf}$ is $(m+n_p)(E[r_e]-E[r_e^2])$. The expressions of $E[r_e], E[r_e^2]$ are given in (\ref{eqn:E_re}) and (\ref{eqn:E2_re}).
\end{proposition}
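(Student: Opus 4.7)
The plan is to reduce the claim to a direct application of Theorem \ref{theorem:edit-distance} combined with the symmetric role played by existing and potential edges under the $(k,\epsilon)$-obf construction. By Theorem \ref{theorem:edit-distance}, the total degree variance equals $\sum_{i} p_i(1-p_i)$ summed over all $m+n_p$ edges of $\mathcal{G}$, with no dependence on a reference graph $G$. So the task reduces to evaluating the expectation of this sum with respect to the random draws $r_e \leftarrow R_\sigma$.

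First I would split the sum into two groups: the $m$ existing edges, each carrying probability $p_e = 1 - r_e$, and the $n_p$ potential edges, each carrying probability $p_{e'} = r_{e'}$. A short computation shows that both forms contribute the same expression to $p(1-p)$: for an existing edge $(1-r_e)(1-(1-r_e)) = r_e - r_e^2$, and for a potential edge $r_{e'}(1-r_{e'}) = r_{e'}-r_{e'}^2$. Therefore every edge contributes a term of identical shape $r - r^2$ where $r \sim R_\sigma$.

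Next, using linearity of expectation and the fact that all $r_e$ (both for existing and potential edges) are i.i.d.\ samples from the same truncated normal $R_\sigma$, I get
\begin{equation*}
E[TV_{obf}] = \sum_{i=1}^{m+n_p} E[r_i - r_i^2] = (m+n_p)\bigl(E[r_e] - E[r_e^2]\bigr),
\end{equation*}
which is exactly the stated formula. The explicit expressions for $E[r_e]$ and $E[r_e^2]$ under the truncation of the normal to $[0,1]$ are then obtained by direct integration against the truncated normal density; these are the formulas (\ref{eqn:E_re}) and (\ref{eqn:E2_re}) that the statement references, so I would simply cite them rather than rederive them here.

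The only mildly subtle step is the realization that the two seemingly different probability assignments (namely $1-r_e$ for real edges and $r_e$ for potential edges) produce identical per-edge variance contributions, which lets the whole computation collapse into the clean prefactor $(m+n_p)$. The computation of the moments of the truncated normal itself is the most tedious part, but it is a standard exercise in Gaussian integrals and does not require new ideas.
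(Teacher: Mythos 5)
Your proof is correct and follows essentially the same route as the paper: both split the total variance into the $m$ existing edges and $n_p$ potential edges, observe that each contributes $r_e - r_e^2$ regardless of which group it belongs to, and conclude by taking expectations against $R_\sigma$. Your explicit appeal to linearity of expectation over i.i.d.\ draws is in fact a slightly cleaner rendering of the paper's shorthand $TV_{obf}=(m+n_p)r_e(1-r_e)$, but it is not a different argument.
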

\begin{proof}
In $(k,\epsilon)$-obf, $m$ existing edges are assigned probabilities $1-r_e$ while $n_p$ potential edges are assigned probabilities $r_e$. Therefore, the total variance is $TV_{obf} = m(1-r_e)(1-(1-r_e)) + n_p r_e(1-r_e) = (m+n_p)r_e(1-r_e)$ where $r_e \leftarrow R_\sigma$. Take the expectation of $TV_{obf}$, we get $E[TV_{obf}] = (m+n_p)(E[r_e]-E[r_e^2])$. 

$R_\sigma$ has pdf $f(x) = C\frac{1}{\sigma\sqrt{2\pi}}e^{-x^2/2\sigma^2} \text{ if } x\in [0,1] \text{ and } 0 \text{ otherwise}$. The normalization constant $C = 0.5\text{erf}(1/\sigma\sqrt{2})$ where erf is the error function. Basic integral computations (change of variable and integration by parts) give us the formulas for $E[r_e]$ and $E[r_e^2]$ as follows
\begin{align}
E[r_e] &= \frac{C\sigma}{\sqrt{2\pi}}(1-e^{-1/2\sigma^2}) \label{eqn:E_re}\\
E[r_e^2] &= \frac{C\sigma}{\sqrt{2\pi}}(\frac{\sigma\sqrt{2\pi}}{C}-e^{-1/2\sigma^2}) \label{eqn:E2_re}
\end{align}
\end{proof}
Note that for $\sigma \leq 0.1$, $C \approx 1$ and $e^{-1/2\sigma^2} \approx 0$, so 
\begin{equation}
E[TV_{obf}] \approx (m+n_p)\left(\frac{\sigma}{\sqrt{2\pi}} - \sigma^2\right) 
\end{equation}

\begin{proposition}
\label{propos:randwalk-var}
The total variance of \textit{RandWalk-mod} $TV_{RW}(t)$ at walk-length $t$ is upper bounded by $\frac{m(K_t-m)}{K_t}$ where $K_t$ is the number of non-zeros in $B^{(t)}$. 

For power-law graphs with the exponent $\gamma$, $TV_{RW}^{PL}(\infty) = m-\frac{1}{2}\left[\frac{\zeta(\gamma-2)}{\zeta(\gamma-1)}\right]^2$.
For sparse ER random graphs with $\lambda=np$ constant, $TV_{RW}^{ER}(\infty) = m-\frac{1}{2}(\lambda+1)^2$
\end{proposition}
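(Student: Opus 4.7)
The plan is to split the proof into three parts corresponding to the three claims, and to use the facts already proved in Lemma \ref{lem:expected-deg}, Lemma \ref{lem:symmetry}, and Theorem \ref{theorem:limit}. Throughout, the key observation is that by Theorem \ref{theorem:randwalk-mod} with $\alpha=0.5$ the uncertain adjacency matrix coincides with $B^{(t)}=AP_{RW}^{t-1}$ on its support, and by Lemma \ref{lem:expected-deg} its row sums are the degrees of $G_0$. Hence, if we index the distinct (undirected) non-zero entries of $B^{(t)}$ by $i=1,\dots,K_t$ and write $p_i$ for the corresponding edge probabilities, we have $\sum_{i=1}^{K_t} p_i = m$.

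For the upper bound, I would mimic the argument used in the proof of Proposition \ref{propos:max-var}. By Theorem \ref{theorem:edit-distance}, the total variance is
\[
TV_{RW}(t) \;=\; \sum_{i=1}^{K_t} p_i(1-p_i) \;=\; m - \sum_{i=1}^{K_t} p_i^2 .
\]
Applying the Cauchy--Schwarz inequality to the $K_t$ non-zero entries gives $\sum_{i=1}^{K_t} p_i^2 \geq \frac{1}{K_t}\bigl(\sum_i p_i\bigr)^2 = \frac{m^2}{K_t}$, which yields $TV_{RW}(t) \leq m - \frac{m^2}{K_t} = \frac{m(K_t-m)}{K_t}$.

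For the asymptotic expressions I would use the explicit limit $B^{(\infty)}_{ij} = \frac{d_i d_j}{2m}$ established earlier. Since $\sum_{i,j} B^{(\infty)}_{ij} = 2m$ and
\[
\sum_{i,j}\bigl(B^{(\infty)}_{ij}\bigr)^2 \;=\; \frac{(\sum_i d_i^2)^2}{4m^2} \;=\; N_{sl}^2 ,
\]
where $N_{sl}=\sum_i d_i^2/(2m)$ is the expected number of selfloops from Theorem \ref{theorem:limit}, halving to pass from the full sum to the undirected sum (the diagonal contribution being of lower order under the assumed degree tails) gives
\[
TV_{RW}(\infty) \;\approx\; m - \tfrac{1}{2}N_{sl}^2 .
\]
Substituting $N_{sl}=\zeta(\gamma-2)/\zeta(\gamma-1)$ for the power-law case and $N_{sl}=\lambda+1$ for sparse Erd\H{o}s--R\'enyi (both provided by Theorem \ref{theorem:limit}) yields the two closed-form expressions in the statement.

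The main obstacle is the bookkeeping of the diagonal and symmetric entries: the Cauchy--Schwarz step needs $K_t$ to refer specifically to the number of distinct undirected non-zero positions so that $\sum p_i = m$, and the limit computation needs care in separating the contribution of $\sum_i (B^{(\infty)}_{ii})^2$ and of off-diagonal pairs, and in verifying that the residual term $\sum_i d_i^4/(8m^2)$ is negligible compared with $N_{sl}^2/2$ under the power-law and Poisson degree concentrations (true whenever $\gamma>3$, respectively when $\lambda$ is fixed and $n\to\infty$). Once these approximations are justified, the two closed-form results follow immediately by plugging in the selfloop counts supplied by Theorem \ref{theorem:limit}.
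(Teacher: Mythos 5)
Your proof is correct and follows exactly the route the paper indicates for its (omitted) proof: the Cauchy--Schwarz argument of Proposition~\ref{propos:max-var} for the upper bound, and the limiting matrix $B^{(\infty)}_{ij}=d_id_j/2m$ together with the selfloop counts of Theorem~\ref{theorem:limit} for the two closed-form expressions. Your reading of $K_t$ as the number of distinct undirected non-zero positions (so that $\sum_i p_i = m$) is precisely what is needed for the stated bound $\frac{m(K_t-m)}{K_t}$ to come out, and the diagonal bookkeeping you flag is a convention issue the paper itself glosses over, not a gap in your argument.
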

\begin{proof} The proof uses the same arguments as in Proposition \ref{propos:max-var} and Theorem \ref{theorem:limit}. We omit it due to space limitation. 
\end{proof}
% See Appendix \ref{proof:randwalk-var}.

Note that the $K_t$ increases with $t$ and when $t$ is equal to the diameter of $G$, $K_t = n^2$. Therefore, the upper bound of $TV_{RW}(t)$ converges very fast to $m$, compatible with the results in the limit cases of \textit{PL} and \textit{ER} random graphs.

% %
\section{Quantifying Framework}
\label{sec:quantify}
This section describes a generic framework for privacy and utility quantification of anonymization methods.

\subsection{Privacy Measurement}
We focus on structural re-identification attacks under various models of attacker's knowledge as shown in \cite{hay2008resisting}. We quantify the privacy of an anonymized graph as the \textit{sum} of re-identification probabilities of all nodes in the graph. We differentiate \textit{closed-world} from \textit{open-world} adversaries. For example, when a closed-world adversary knows that Bob has three neighbors, this fact is exact. An open-world adversary in this case would learn only that Bob has at least three neighbors. We consider the result of structural query $Q$ on a node $u$ as the node signature $sig_Q(u)$. Given a query $Q$, nodes having the same signatures form an \textit{equivalence class}. So given the true graph $G_0$ and an output anonymized graph $G^*$, the privacy is measured as in the following example.

\begin{example}
Assuming that we have signatures of $G_0$ and signatures of $G^*$ as in Table \ref{tab:ex1}, the re-identification probabilities in $G^*$ of nodes 1,2 are $\frac{1}{3}$, of nodes 4,8 are $\frac{1}{2}$, of nodes 3,5,6,7 are 0s. And the privacy score of $G^*$ is $\frac{1}{3} + \frac{1}{3} + \frac{1}{2} + \frac{1}{2} + 0 + 0 + 0 + 0 = 1.66$. In $G_0$, the privacy score is $\frac{1}{3} + \frac{1}{3} + \frac{1}{3} + \frac{1}{2} + \frac{1}{2} + \frac{1}{3} + \frac{1}{3} + \frac{1}{3} = 3$, equal to the number of equivalence classes.
\end{example}

\begin{table}[htb]
\small
\centering
\caption{Example of node signatures} \label{tab:ex1}
\begin{tabular}{|c|l|}
\hline
\textbf{Graph} &\textbf{Equivalence classes} \\
\hline
$G_0$ & $s_1\{1,2,3\}, s_2\{4,5\}, s_3\{6,7,8\}$ \\
\hline
$G^*$ & $s_1\{1,2,6\}, s_2\{4,7\}, s_3\{3,8\}, s_4\{5\}$\\
\hline
\end{tabular}
\end{table}

We consider two privacy scores in this paper. 
\begin{itemize}
\item $\mathbf{H1}$ score uses node degree as the node signature, i.e. we assume that the attacker know \textit{apriori} degrees of all nodes. 
\item $\mathbf{H2_{open}}$ uses the \textit{set} (not multiset) of degrees of node's friends as the node signature. For example, if a node has 6 neighbors and the degrees of those neighbors are $\{1,2,2,3,3,5\}$, then its signature for $H2_{open}$ attack is $\{1,2,3,5\}$.
\end{itemize}
 Higher-order scores like $H2$ (exact multiset of neighbors' degrees) or $H3$ (exact multiset of neighbor-of-neighbors' degrees) induce much higher privacy scores of the true graph $G_0$ (in the order of $|V|$) and represent less meaningful metrics for privacy. The following proposition claims the \textit{automorphism-invariant} property of structural privacy scores.
 
\begin{proposition}
\label{propos:automorphism}
All privacy scores based on structural queries \cite{hay2008resisting} are automorphism-invariant, i.e. if we find a non-trivial automorphism $G_1$ of $G_0$, the signatures of all nodes in $G_1$ are unchanged.
\end{proposition}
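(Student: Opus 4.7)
The plan is to reduce the statement to the definition of a \emph{structural query}. A structural query $Q$, in the sense of Hay et al.\ \cite{hay2008resisting}, is one whose output at a node $u$ depends only on the isomorphism type of (some neighborhood of) the graph rooted at $u$; equivalently, if $\psi : G \to G'$ is a graph isomorphism, then $sig_Q^G(u) = sig_Q^{G'}(\psi(u))$ for every $u \in V(G)$. I would first make this definition explicit, since once it is in place the proposition reduces to an immediate application.

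Next, I would recall that a non-trivial automorphism $\phi$ of $G_0$ is a bijection $\phi : V \to V$ with $(u,v) \in E_{G_0} \iff (\phi(u),\phi(v)) \in E_{G_0}$, i.e.\ a self-isomorphism of $G_0$. Interpreting $G_1$ as the relabeled graph $\phi(G_0)$ with edge set $\{(\phi(u),\phi(v)) : (u,v)\in E_{G_0}\}$, the automorphism condition gives $G_1 = G_0$ as labeled graphs. Applying the defining property of a structural query to $\psi = \phi$ then yields $sig_Q(u) = sig_Q(\phi(u))$ for every $u$ and every structural query $Q$. In particular, nodes in the same $\phi$-orbit lie in the same equivalence class, so the signatures (and hence the whole equivalence-class partition) are unchanged when $G_0$ is replaced by $G_1$.

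Finally, I would verify the claim concretely for the two scores highlighted in the paper. For $H1$, since $\phi$ bijects $\mathcal{N}(u)$ with $\mathcal{N}(\phi(u))$, we have $d_u(G_0)=d_{\phi(u)}(G_0)$, so degrees agree. For $H2_{\text{open}}$, the same bijection maps $\mathcal{N}(u)$ onto $\mathcal{N}(\phi(u))$ and, applying the $H1$ observation pointwise, preserves the degree of each neighbor; hence the (multi)sets of neighbor degrees coincide and so do their underlying sets. An analogous argument extends inductively to $H2$ and $H3$.

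The main (and only) obstacle is linguistic rather than mathematical: pinning down what exactly counts as a \emph{structural query} so that the invariance under graph isomorphisms is part of the definition. Once that is agreed upon, the proposition is a one-line consequence of the fact that an automorphism is a self-isomorphism; no nontrivial combinatorial argument is required.
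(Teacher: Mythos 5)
Your proof is correct and follows exactly the route the paper intends: the paper's own ``proof'' merely states that the result is ``trivially based on the definition of graph automorphism'' and omits the details, and your argument (structural queries are isomorphism-invariant, an automorphism is a self-isomorphism, hence signatures and the equivalence-class partition are preserved) is precisely the filled-in version of that remark. The concrete verification for $H1$ and $H2_{\text{open}}$ is a useful addition but not a departure from the paper's approach.
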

\begin{proof} The proof is trivially based on the definition of graph automorphism. We omit it due to the lack of space.
%$G_1$ is an automorphism of $G_0$ if there exists a permutation $\sigma: V \rightarrow V$ such that $(u,v) \in E_{G_0} \leftrightarrow (\sigma(u), \sigma(v)) \in E_{G_1}$. For $H1$ score, it is straightforward to verify that $H1_{G_1}(u) = H1_{G_0}(\sigma(u))$ according to the definition of $\sigma$. 
%
%For $H2_{open}$ score, we prove that $\forall d_v \in H2_{G_0}(u)$ we also have $d_v \in H2_{G_1}(\sigma(u))$ and vice versa. Because $d_v \in H2_{G_0}(u) \rightarrow (u,v) \in E_{G_0} \rightarrow (\sigma(u), \sigma(v)) \in E_{G_1}$. Note that $d_{\sigma(v)} = d_v$ ($H1$ unchanged), so $d_v \in H2_{G_1}(\sigma(u))$. The reverse is proved similarly.
%
%This argument can be extended to any structural queries (signatures) in \cite{hay2008resisting}.
\end{proof}

\subsection{Utility Measurement}
\label{subsec:utility-measure}
Following \cite{boldi2012injecting} and \cite{ying2008randomizing}, we consider three groups of statistics for utility measurement: degree-based statistics, shortest-path based statistics and clustering statistics.

\subsubsection{Degree-based statistics}
\begin{itemize}
\item Number of edges: $S_{NE} = \frac{1}{2}\sum_{v\in V} d_v$
\item Average degree: $S_{AD} = \frac{1}{n}\sum_{v\in V} d_v$
\item Maximal degree: $S_{MD} = \max_{v\in V} d_v$
\item Degree variance: $S_{DV} = \frac{1}{n}\sum_{v\in V} (d_v-S_{AD})^2$
\item Power-law exponent of degree sequence: $S_{PL}$ is the estimate of $\gamma$ assuming the degree sequence follows a power-law $\Delta(d) \sim d^{-\gamma}$
\end{itemize}

\subsubsection{Shortest path-based statistics}
\begin{itemize}
\item Average distance: $S_{APD}$ is the average distance among all pairs of vertices that are path-connected.
\item Effective diameter: $S_{EDiam}$ is the 90-th percentile distance among all path-connected pairs of vertices.
\item Connectivity length: $S_{CL}$ is defined as the harmonic mean of all pairwise distances in the graph.
\item Diameter : $S_{Diam}$ is the maximum distance among all path-connected pairs of vertices.
\end{itemize}

\subsubsection{Clustering statistics}
\begin{itemize}
\item Clustering coefficient: $S_{CC} = \frac{3N_{\Delta}}{N_3}$ where $N_{\Delta}$ is the number of triangles and $N_3$ is the number of connected triples.
\end{itemize}

All of the above statistics are computed on sample graphs generated from the uncertain output $\mathcal{G}$. In particular, to estimate shortest-path based measures, we use Approximate Neighbourhood Function (ANF) \cite{palmer2002anf}. The diameter is lower bounded by the longest distance among all-destination bread-first-searches from 1,000 randomly chosen nodes.

% %
\section{Evaluation}
\label{sec:eval}
In this section, our evaluation aims to show the disadvantages of \textit{$(k,\epsilon)$-obf} and \textit{RandWalk/RandWalk-mod} as well as the gap between them. We then illustrate the effectiveness and efficiency of the gap-filling approaches \textit{MaxVar} and \textit{Mixture}. The effectiveness is measured by privacy scores (lower is better) and the relative error of utility (lower is better). The efficiency is measured by the running time. All algorithms are implemented in Python and run on a desktop PC with $Intel^{\circledR}$ Core i7-4770@ 3.4Ghz, 16GB memory. We use \textit{MOSEK}\footnote{http://mosek.com/} as the quadratic solver. 

Three large real-world datasets are used in our experiments \footnote{http://snap.stanford.edu/data/index.html}. \texttt{dblp} is a co-authorship network where two authors are connected if they publish at least one paper together. \texttt{amazon} is a product co-purchasing network where the graph contains an undirected edge from $i$ to $j$ if a product $i$ is frequently co-purchased with product $j$. \texttt{youtube} is a video-sharing web site that includes a social network. The graph sizes $(|V|,|E|)$ of \texttt{dblp}, \texttt{amazon} and \texttt{youtube} are (317080, 1049866), (334863, 925872) and (1134890, 2987624) respectively. We partition \texttt{dblp}, \texttt{amazon} into 20 subgraphs and \texttt{youtube} into 60 subgraphs. The sample size of each test case is 20.

\subsection{$(k,\epsilon)$-obf and RandWalk}
We report the performance of \textit{$(k,\epsilon)$-obf} in Table \ref{tab:k-obfuscation}. We keep the number of potential edges equal to $m$ (default value in \cite{boldi2012injecting}) and vary $\sigma$. We find that the scheme achieves low relative errors only at small $\sigma$. However, privacy scores, especially $H2_{open}$, rise fast (up to 50\% compared to the true graph). This fact incurs high privacy-utility tradeoff as confirmed in Table \ref{tab:compare}.

Table \ref{tab:randwalk} shows the performance similarity between \textit{RandWalk} and \textit{RandWalk-mod} except the case of \texttt{youtube} and for $t=2$ in \texttt{amazon}. Because \textit{RandWalk-mod} satisfies the third constraint, it benefits several degree-based statistics while the existence of selfloops and multiedges does not impact much on shortest-path based metrics. \textit{RandWalk} misses a lot of edges at $t=2$ (see footnote \ref{foot:1} in Section \ref{subsec:model-randwalk}). The remarkable characteristics of random-walk schemes are the very low privacy scores and the high relative errors (lower-bounded around 8 to 10\%). Clearly, there is a gap between high tradeoffs in 
\textit{$(k,\epsilon)$-obf} and high relative errors in \textit{RandWalk} where \textit{MaxVar} and \textit{Mixture} may play their roles.

\begin{table*}[htb]
\small
\centering
\caption{\textit{$(k,\epsilon)$-obf}} \label{tab:k-obfuscation}
\begin{tabular}{r||r|r||r|r|r|r|r|r|r|r|r|r||r}
\hline
$\sigma$ & $H1$& $H2_{open}$& $S_{NE}$ & $S_{AD}$& $S_{MD}$& $S_{DV}$& $S_{CC}$& $S_{PL}$& $S_{APD}$& $S_{ED}$& $S_{CL}$& $S_{Diam}$ & rel.err\\
\hline
\texttt{dblp} & 199 & 125302 & 1049866 & 6.62 & 343 & 100.15 & 0.306 & 2.245 & 7.69 & 9 & 7.46 & 20 &  \\
\hline
0.001 & 72.9 & 40712.1 & 1048153 & 6.61 & 316.0 & 97.46 & 0.303 & 2.244 & 7.74 & 9.4 & 7.50 & 20.0 & \textbf{0.018} \\
0.01  & 41.1 & 24618.2 & 1035994 & 6.53 & 186.0 & 86.47 & 0.294 & 2.248 & 7.82 & 9.5 & 7.59 & 19.8 & \textbf{0.077} \\
0.1   & 19.7 & 7771.4 & 991498 & 6.25 & 164.9 & 64.20 & 0.284 & 2.265 & 8.08 & 10.0 & 7.85 & 20.0 & \textbf{0.128} \\
\hline
\hline
\texttt{amazon} & 153 & 113338 & 925872 & 5.53 & 549 & 33.20 & 0.205 & 2.336 & 12.75 & 16 & 12.10 & 44 & \\
\hline
0.001 & 55.7 & 55655.9 & 924321 & 5.52 & 479.1 & 31.73 & 0.206 & 2.340 & 12.14 & 15.2 & 11.65 & 33.2 & \textbf{0.057} \\
0.01  & 34.5 & 39689.8 & 915711 & 5.47 & 299.7 & 27.18 & 0.220 & 2.348 & 12.40 & 15.6 & 11.91 & 32.4 & \textbf{0.101} \\
0.1   & 19.2 & 16375.4 & 892140 & 5.33 & 253.9 & 21.87 & 0.232 & 2.374 & 12.52 & 15.5 & 12.06 & 31.4 & \textbf{0.144} \\
\hline
\hline
\texttt{youtube} & 978 & 321724 & 2987624 & 5.27 & 28754 & 2576.0 & 0.0062 & 2.429 & 6.07 & 8 & 6.79 & 20 & \textbf{} \\
\hline
0.001 & 157.2 & 36744.6 & 2982974 & 5.26 & 28438 & 2522.6 & 0.0062 & 2.416 & 6.24 & 8.0 & 6.01 & 19.5 & \textbf{0.022} \\ 
0.01  & 80.0 & 22361.7 & 2940310 & 5.18 & 26900 & 2282.6 & 0.0061 & 2.419 & 6.27 & 8.0 & 6.04 & 19.0 & \textbf{0.043} \\ 
0.1   & 23.4 & 5806.9 & 2624066 & 4.62 & 16353 & 970.8 & 0.0070 & 2.438 & 6.59 & 8.1 & 6.36 & 20.4 & \textbf{0.160} \\ 
\hline
\end{tabular}
\end{table*}

\begin{table*}[htb]
\small
\centering
\caption{\textit{RandWalk} and \textit{RandWalk-mod}} \label{tab:randwalk}
\begin{tabular}{r||r|r||r|r|r|r|r|r|r|r|r|r||r}
\hline
$t$ & $H1$& $H2_{open}$& $S_{NE}$ & $S_{AD}$& $S_{MD}$& $S_{DV}$& $S_{CC}$& $S_{PL}$& $S_{APD}$& $S_{ED}$& $S_{CL}$& $S_{Diam}$ & rel.err\\
\hline
\texttt{dblp} & 199 & 125302 & 1049866 & 6.62 & 343 & 100.15 & 0.306 & 2.245 & 7.69 & 9 & 7.46 & 20 &  \\
\hline
(RW) 2 & 10.0 & 4.9 & 1001252 & 6.32 & 309.3 & 86.16 & 0.152 & 2.197 & 7.43 & 9.1 & 7.20 & 19.7 & \textbf{0.094} \\
	3 & 11.8 & 10.9 & 1048129 & 6.61 & 315.4 & 98.04 & 0.107 & 2.155 & 7.08 & 8.7 & 6.88 & 17.8 & \textbf{0.110} \\
	5 & 11.7 & 5.6 & 1049484 & 6.62 & 321.6 & 100.77 & 0.065 & 2.148 & 6.79 & 8.0 & 6.62 & 16.4 & \textbf{0.142} \\
	10 & 11.9 & 2.9 & 1049329 & 6.62 & 329.2 & 103.06 & 0.030 & 2.144 & 6.54 & 8.0 & 6.40 & 14.3 & \textbf{0.171} \\
\hline
(RW-mod) 2 & 11.8 & 4.5 & 1049921 & 6.62 & 327.0 & 105.3 & 0.093 & 2.110 & 7.75 & 9.7 & 7.48 & 23.0 & \textbf{0.109} \\
		3 & 11.9 & 9.4 & 1049877 & 6.62 & 343.3 & 105.1 & 0.071 & 2.117 & 7.32 & 9.0 & 7.10 & 20.4 & \textbf{0.099} \\
		5 & 12.0 & 5.4 & 1049781 & 6.62 & 340.5 & 105.1 & 0.044 & 2.115 & 6.95 & 8.4 & 6.76 & 18.3 & \textbf{0.131} \\
		10 & 11.9 & 2.6 & 1049902 & 6.62 & 340.0 & 105.3 & 0.021 & 2.116 & 6.59 & 8.0 & 6.44 & 16.0 & \textbf{0.164} \\
\hline
\hline
\texttt{amazon} & 153 & 113338 & 925872 & 5.53 & 549 & 33.20 & 0.205 & 2.336 & 12.75 & 16 & 12.10 & 44 & \\
\hline
(RW) 2 & 5.7 & 5.4 & 861896 & 5.15 & 274.9 & 23.11 & 0.148 & 2.337 & 10.70 & 13.8 & 10.19 & 38.7 & \textbf{0.180} \\
	3 & 10.0 & 16.5 & 923793 & 5.52 & 495.6 & 32.72 & 0.113 & 2.282 & 10.33 & 13.1 & 9.87 & 34.1 & \textbf{0.137} \\
	5 & 10.4 & 8.6 & 925185 & 5.53 & 507.7 & 33.52 & 0.080 & 2.276 & 9.45 & 12.1 & 9.07 & 29.6 & \textbf{0.181} \\
	10 & 10.2 & 4.6 & 925748 & 5.53 & 498.1 & 34.37 & 0.046 & 2.273 & 8.55 & 10.5 & 8.25 & 25.7 & \textbf{0.234} \\
\hline
(RW-mod) 2 & 9.8 & 3.2 & 925672 & 5.53 & 255.1 & 37.61 & 0.099 & 2.246 & 12.02 & 15.5 & 11.40 & 43.2 & \textbf{0.139} \\
		3 & 9.9 & 11.2 & 925532 & 5.53 & 535.3 & 37.32 & 0.082 & 2.254 & 10.89 & 14.0 & 10.38 & 37.9 & \textbf{0.134} \\
		5 & 9.7 & 6.0 & 926163 & 5.53 & 522.8 & 37.42 & 0.059 & 2.252 & 9.83 & 12.5 & 9.40 & 33.0 & \textbf{0.185} \\
		10 & 9.9 & 3.3 & 925809 & 5.53 & 491.4 & 37.45 & 0.035 & 2.251 & 8.76 & 11.0 & 8.44 & 28.7 & \textbf{0.238} \\
\hline
\hline
\texttt{youtube} & 978 & 321724 & 2987624 & 5.27 & 28754 & 2576.0 & 0.0062 & 2.429 & 6.07 & 8 & 6.79 & 20 & \textbf{} \\
\hline
(RW) 2 & 13.4 & 1.5 & 2636508 & 4.65 & 19253.8 & 1139.7 & 0.022 & 2.191 & 6.18 & 7.9 & 5.93 & 23.5 & \textbf{0.403} \\
	3 & 23.8 & 17.6 & 2982204 & 5.26 & 26803.6 & 2389.6 & 0.004 & 2.108 & 5.73 & 7.0 & 5.52 & 18.0 & \textbf{0.103} \\
	5 & 24.6 & 8.4 & 2985967 & 5.26 & 26018.7 & 2340.0 & 0.005 & 2.106 & 5.55 & 7.0 & 5.38 & 16.3 & \textbf{0.120} \\
	10 & 21.9 & 1.8 & 2984115 & 5.26 & 24695.8 & 2099.4 & 0.009 & 2.100 & 5.49 & 6.9 & 5.33 & 18.7 & \textbf{0.145} \\
\hline
(RW-mod) 2 & 26.4 & 1.4 & 2987228 & 5.26 & 23829.7 & 2578.5 & 0.018 & 2.053 & 6.27 & 8.0 & 6.02 & 22.1 & \textbf{0.245} \\
		3 & 26.9 & 22.3 & 2988011 & 5.27 & 28611.5 & 2579.7 & 0.005 & 2.077 & 5.75 & 7.2 & 5.54 & 19.0 & \textbf{0.081} \\
		5 & 26.1 & 11.0 & 2987479 & 5.26 & 28619.3 & 2581.4 & 0.005 & 2.076 & 5.61 & 7.0 & 5.44 & 18.3 & \textbf{0.090} \\
		10 & 26.3 & 1.7 & 2987475 & 5.26 & 28432.2 & 2579.9 & 0.008 & 2.073 & 5.58 & 7.0 & 5.41 & 18.8 & \textbf{0.099} \\
\hline
\end{tabular}
\end{table*}

\subsection{Effectiveness of MaxVar}
We assess privacy and utility of \textit{MaxVar} by varying the number of potential edges $n_p$. The results are shown in Table \ref{tab:effectiveness}. As for privacy scores, if we increase $n_p$, we gain better privacy as we allow more edge switches. Due to the expected degree constraints in the quadratic program, all degree-based metrics vary only a little.

We observe the near \textit{linear} relationships between $H1$, $rel.err$ and the number of replaced edges $|E_{G_0}\setminus E_G|$ in Figures \ref{fig:mv_H1}, \ref{fig:mv_rel_err} and near \textit{quadratic} relationship of $H2_{open}$ against $|E_{G_0}\setminus E_G|$ in Fig.\ref{fig:mv_H2}. The ratio of replaced edges in Figures \ref{fig:mv_H1},\ref{fig:mv_H2} and \ref{fig:mv_rel_err} is defined as $\frac{|E_{G_0}\setminus E_G|}{|E_{G_0}|}$.

The runtime of MaxVar consists of time for (1) partitioning $G_0$, (2) adding friend-of-friend edges to subgraphs, (3) solving quadratic subproblems and (4) combining uncertain subgraphs to get $\mathcal{G}$. We report the runtime in Fig.\ref{fig:mv_runtime}. As we can see, the total runtime is in several minutes and the runtime of the partitioning step is almost negligible. Increasing $n_p$ gives rise to the runtime in steps 2,3 and 4 and the trends are nearly linear. The runtime on \texttt{youtube} is three times longer than on the other two datasets, almost linear to their data sizes.

\begin{table*}[htb]
\small
\centering
\caption{Effectiveness of \textit{MaxVar} ($k$ denotes one thousand)} \label{tab:effectiveness}
\begin{tabular}{r||r|r||r|r|r|r|r|r|r|r|r|r||r}
\hline
$n_p$ & $H1$& $H2_{open}$& $S_{NE}$ & $S_{AD}$& $S_{MD}$& $S_{DV}$& $S_{CC}$& $S_{PL}$& $S_{APD}$& $S_{ED}$& $S_{CL}$& $S_{Diam}$ & rel.err\\
\hline
\texttt{dblp} & 199 & 125302 & 1049866 & 6.62 & 343 & 100.15 & 0.306 & 2.245 & 7.69 & 9 & 7.46 & 20 &  \\
\hline
$200k$ & 59.7 & 3257.2 & 1049774 & 6.62 & 342.3 & 100.73 & 0.279 & 2.213 & 7.66 & 9.3 & 7.43 & 19.5 & \textbf{0.017} \\
$400k$ & 40.7 & 744.0 & 1049813 & 6.62 & 343.5 & 101.26 & 0.255 & 2.189 & 7.56 & 9.1 & 7.33 & 18.9 & \textbf{0.030} \\
$600k$ & 32.1 & 325.7 & 1050066 & 6.62 & 343.4 & 101.73 & 0.235 & 2.173 & 7.46 & 9.0 & 7.25 & 17.7 & \textbf{0.045} \\
$800k$ & 29.5 & 199.2 & 1049869 & 6.62 & 345.9 & 102.07 & 0.219 & 2.163 & 7.45 & 9.0 & 7.24 & 17.0 & \textbf{0.056} \\
$1000k$ & 27.0 & 140.7 & 1049849 & 6.62 & 345.4 & 102.29 & 0.205 & 2.155 & 7.34 & 9.0 & 7.15 & 17.0 & \textbf{0.064} \\
\hline
\hline
\texttt{amazon} & 153 & 113338 & 925872 & 5.53 & 549 & 33.20 & 0.205 & 2.336 & 12.75 & 16 & 12.10 & 44 & \\
\hline
$200k$ & 30.2 & 2209.1 & 925831 & 5.53 & 551.5 & 33.83 & 0.197 & 2.321 & 12.38 & 16.1 & 11.72 & 40.5 & \textbf{0.022} \\
$400k$ & 22.8 & 452.4 & 925928 & 5.53 & 550.2 & 34.40 & 0.182 & 2.306 & 11.88 & 15.3 & 11.28 & 37.1 & \textbf{0.050} \\
$600k$ & 17.8 & 188.4 & 925802 & 5.53 & 543.9 & 34.79 & 0.167 & 2.296 & 11.60 & 15.0 & 11.04 & 36.9 & \textbf{0.066} \\
$800k$ & 17.2 & 118.8 & 925660 & 5.53 & 550.0 & 35.11 & 0.154 & 2.289 & 11.33 & 14.4 & 10.81 & 34.5 & \textbf{0.087} \\
$1000k$ & 15.2 & 82.4 & 925950 & 5.53 & 551.8 & 35.43 & 0.142 & 2.282 & 11.13 & 14.1 & 10.62 & 31.8 & \textbf{0.105} \\
\hline
\hline
\texttt{youtube} & 978 & 321724 & 2987624 & 5.27 & 28754 & 2576.0 & 0.0062 & 2.429 & 6.07 & 8 & 6.79 & 20 & \textbf{} \\
\hline
$600k$ & 114.4 & 4428.8 & 2987898 & 5.27 & 28759 & 2576 & 0.0065 & 2.373 & 6.19 & 7.8 & 5.97 & 18.6 & \textbf{0.030} \\ 
$1200k$ & 84.2 & 1419.2 & 2987342 & 5.26 & 28754 & 2576 & 0.0064 & 2.319 & 6.02 & 7.2 & 5.82 & 17.9 & \textbf{0.042} \\ 
$1800k$ & 71.4 & 814.4 & 2987706 & 5.27 & 28745 & 2577 & 0.0062 & 2.287 & 5.97 & 7.1 & 5.78 & 17.2 & \textbf{0.049} \\ 
$2400k$ & 65.3 & 595.5 & 2987468 & 5.26 & 28749 & 2577 & 0.0060 & 2.265 & 5.96 & 7.1 & 5.77 & 16.6 & \textbf{0.056} \\
$3000k$ & 62.8 & 513.7 & 2987771 & 5.27 & 28761 & 2578 & 0.0058 & 2.251 & 5.89 & 7.1 & 5.71 & 16.4 & \textbf{0.062} \\
\hline
\end{tabular}
\end{table*}

\begin{figure}
\centering
\begin{minipage}{.22\textwidth}
	\centering
	\includegraphics[height=1.4in]{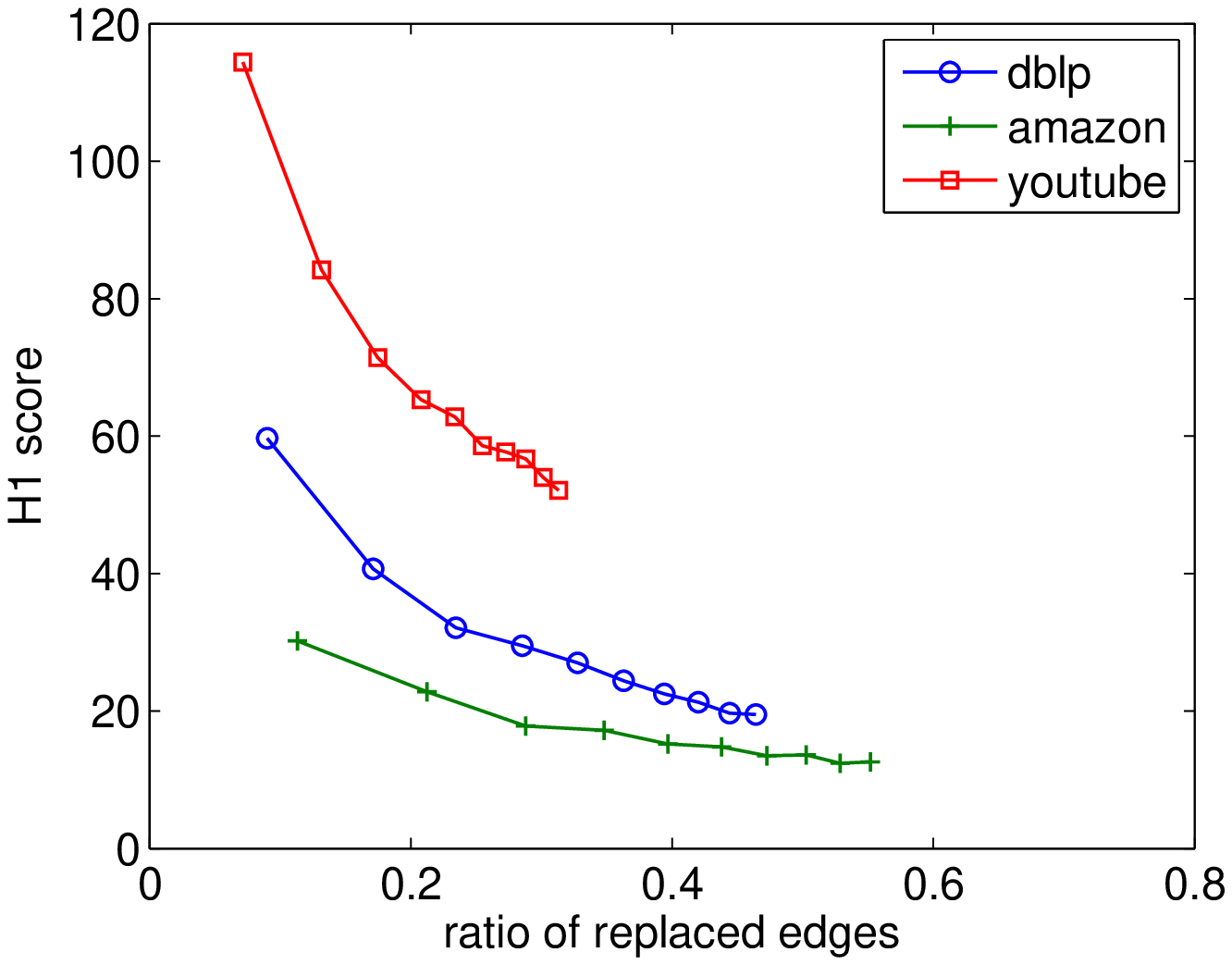}
	\captionof{figure}{$H1$ of MaxVar}
	\label{fig:mv_H1}
\end{minipage}%
\hfill
\begin{minipage}{.22\textwidth}
	\centering
	\includegraphics[height=1.4in]{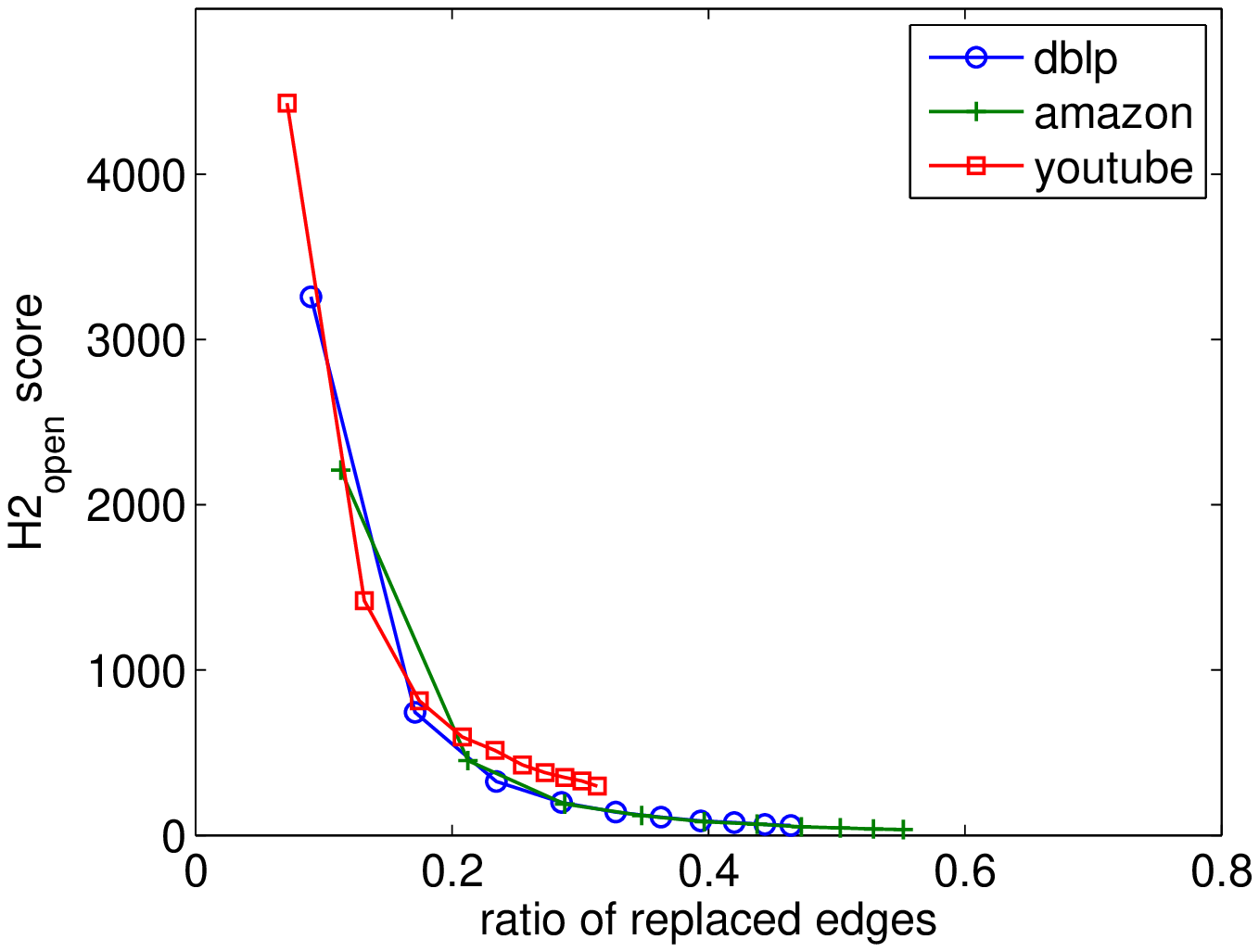}
	\captionof{figure}{$H2_{open}$ of MaxVar}
	\label{fig:mv_H2}
\end{minipage}
\vspace{-1.0em}
\end{figure}

\begin{figure}
\begin{minipage}{.22\textwidth}
		\centering
		\includegraphics[height=1.4in]{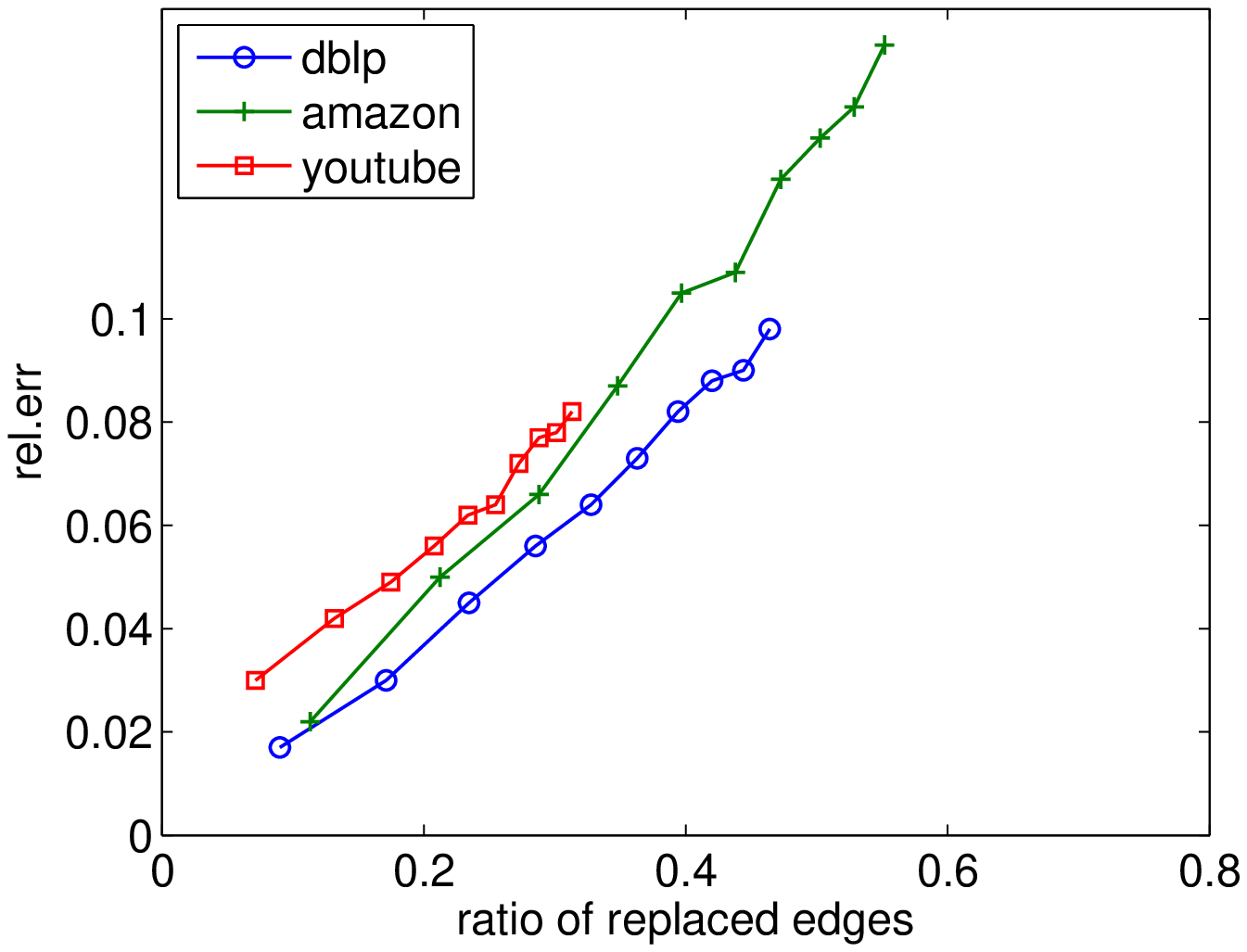}
		\captionof{figure}{Rel.error of MaxVar}
		\label{fig:mv_rel_err}	
\end{minipage}%
\hfill
\begin{minipage}{.22\textwidth}
	\centering
	\includegraphics[height=1.4in]{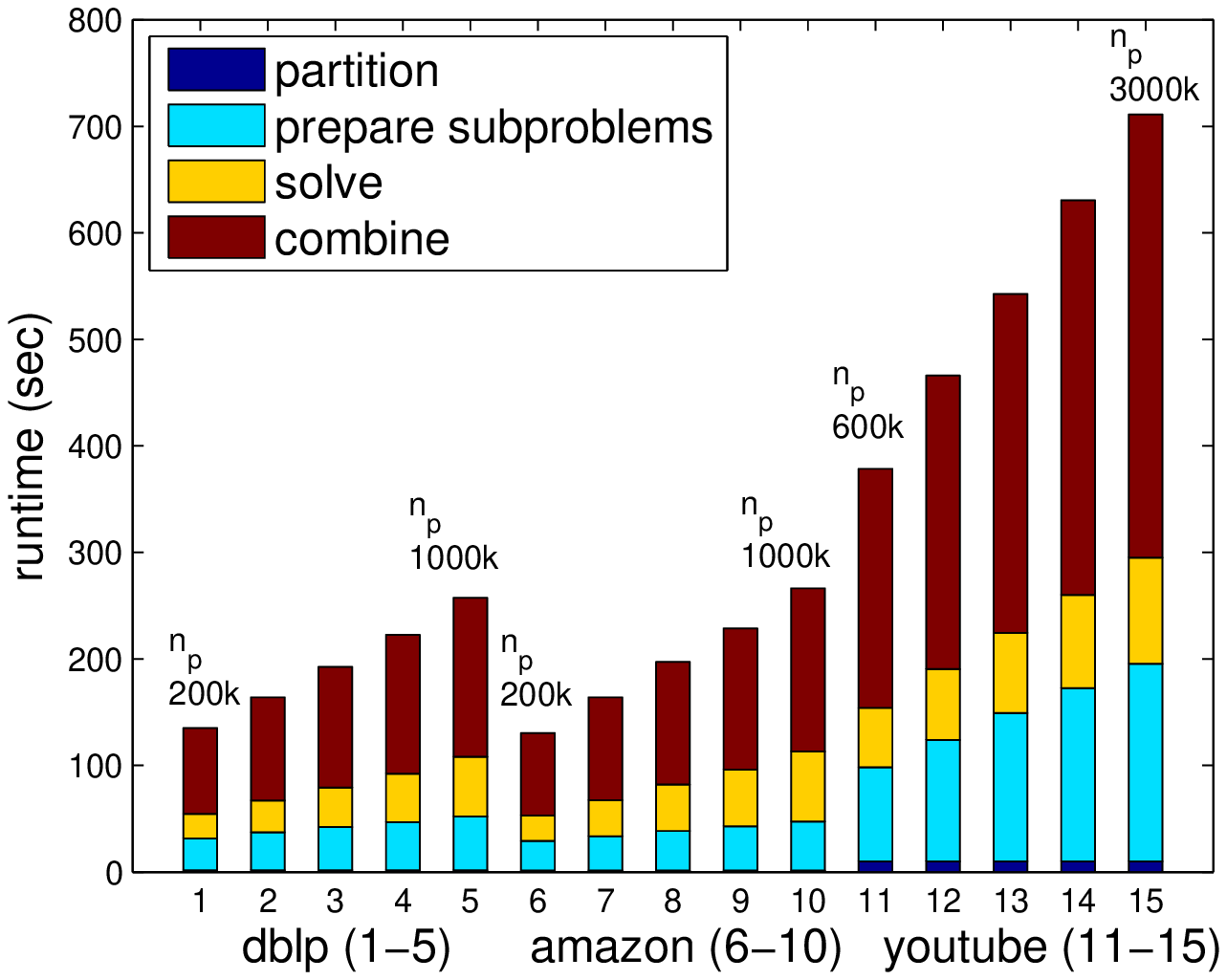}
	\captionof{figure}{Runtime of MaxVar}
	\label{fig:mv_runtime}
\end{minipage}
\vspace{-1.0em}
\end{figure}

\subsection{Comparative Evaluation}

\begin{figure}
\centering
\begin{minipage}{.22\textwidth}
	\centering
	\includegraphics[height=1.4in]{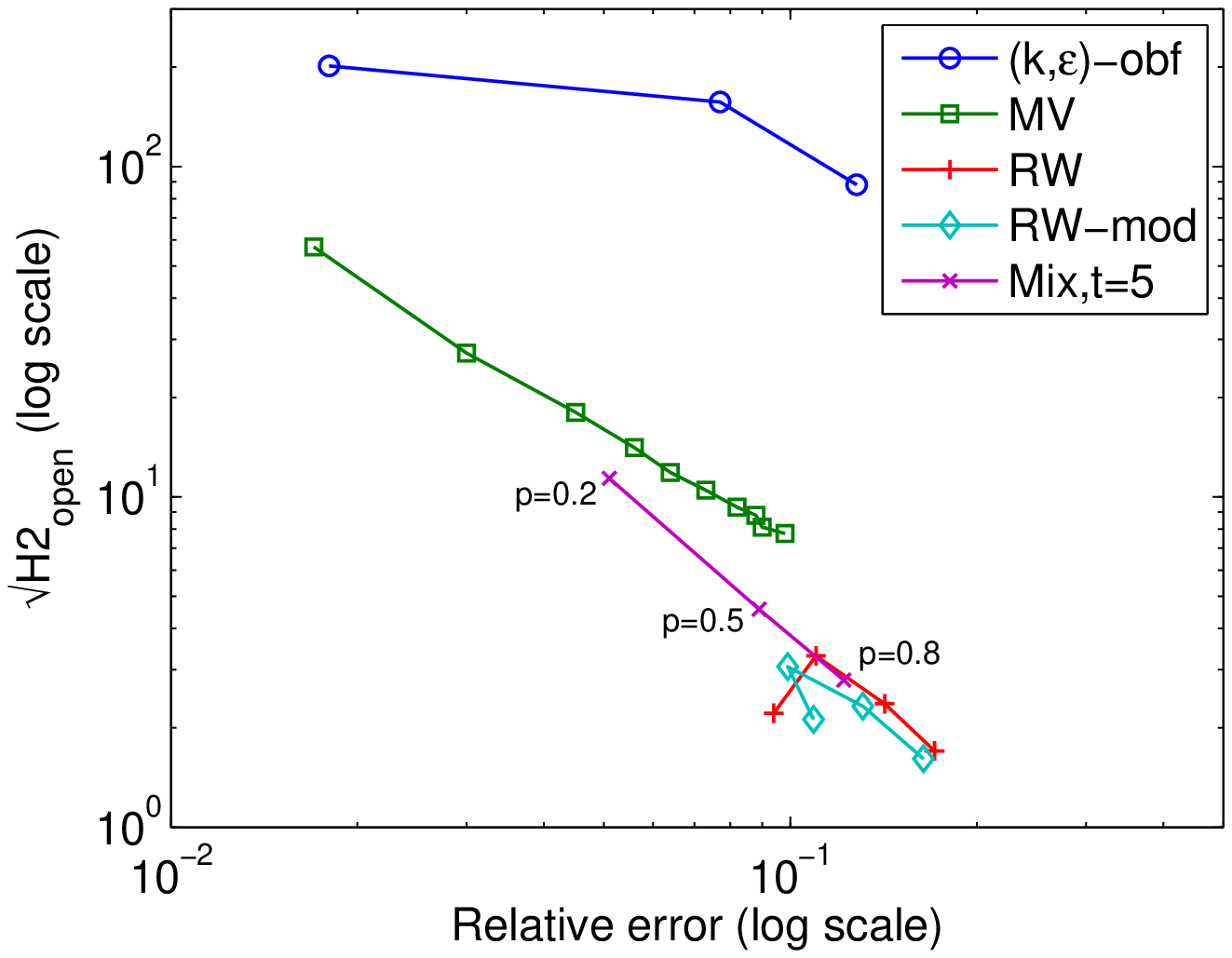}
	\captionof{figure}{Tradeoff in dblp}
	\label{fig:tradeoff-dblp}
\end{minipage}%
\hfill
\begin{minipage}{.22\textwidth}
	\centering
	\includegraphics[height=1.4in]{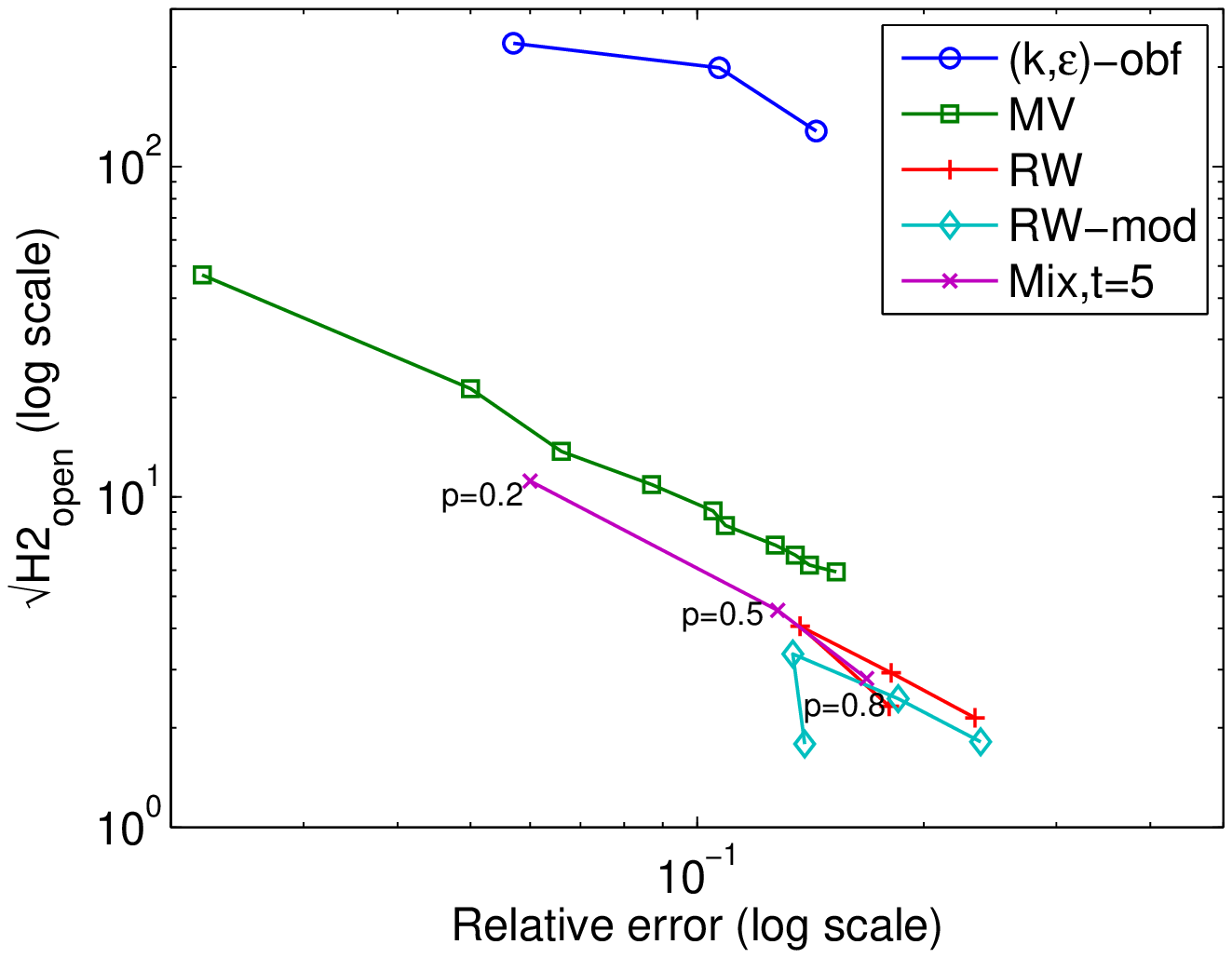}
	\captionof{figure}{Tradeoff in amazon}
	\label{fig:tradeoff-amazon}
\end{minipage}
\vspace{-1.0em}
\end{figure}

\begin{figure}
\centering
\begin{minipage}{.22\textwidth}
	\centering
	\includegraphics[height=1.4in]{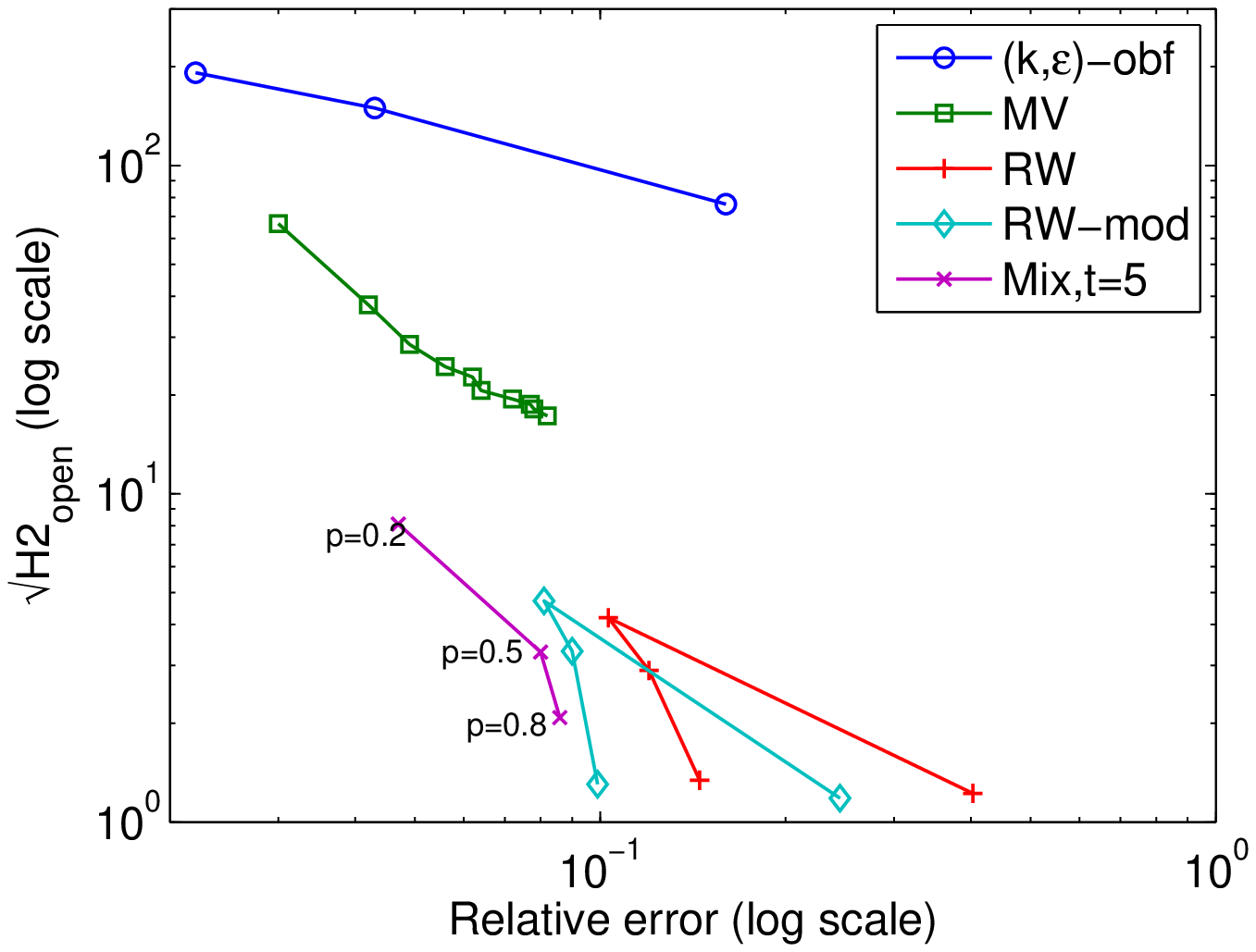}
	\captionof{figure}{Tradeoff in youtube}
	\label{fig:tradeoff-youtube}
\end{minipage}%
\hfill
\begin{minipage}{.22\textwidth}
	\centering
	\includegraphics[height=1.4in]{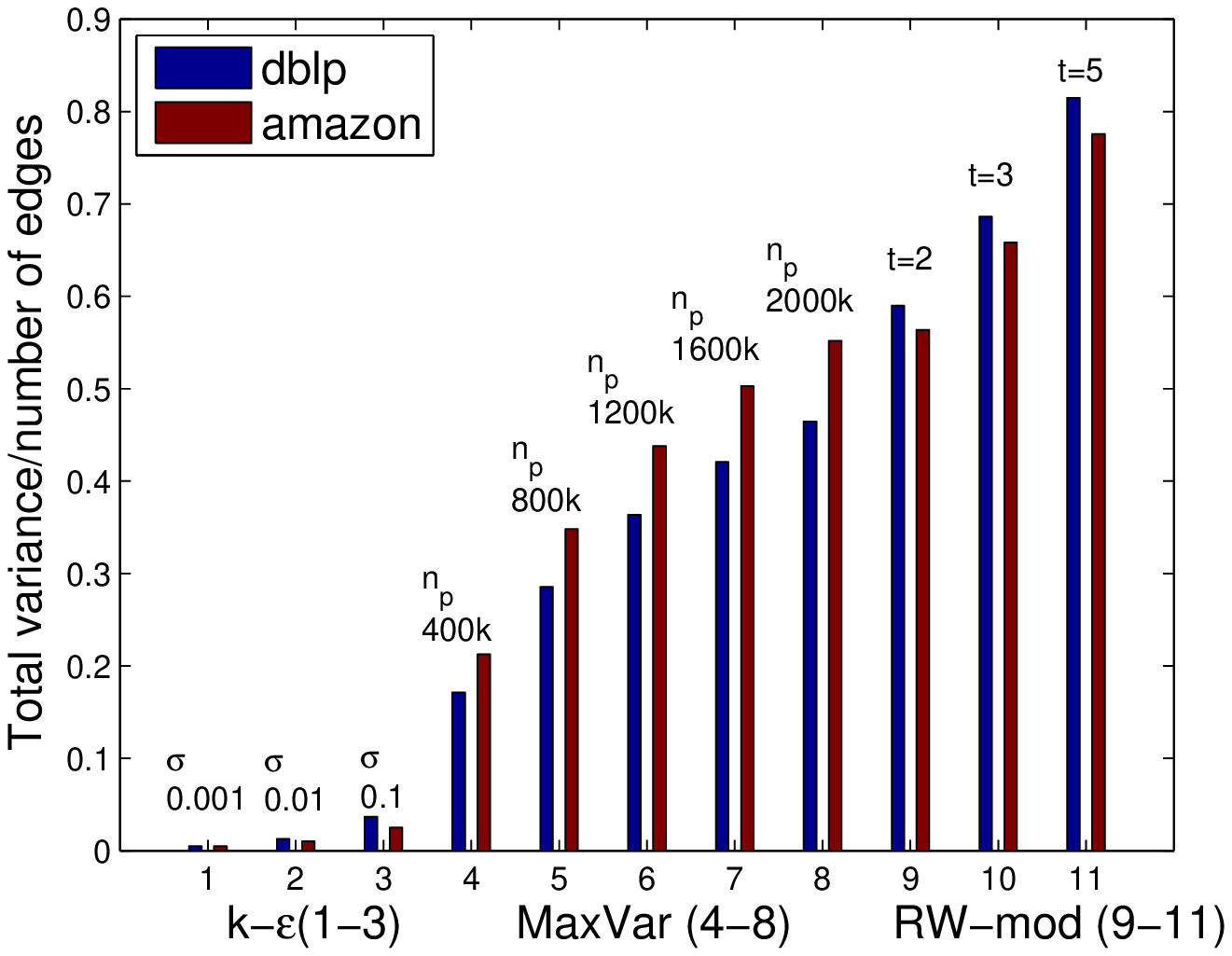}
	\captionof{figure}{Total Variance (TV)}
	\label{fig:total-variance}
\end{minipage}
\vspace{-1.0em}
\end{figure}

Table \ref{tab:compare} shows comparisons between \textit{MaxVar}, $(k,\epsilon)$-obf and \textit{RandWalk/RandWalk-mod}. The column \textit{tradeoff} is $\sqrt{H2_{open}} \times rel.err$ as we conjecture the quadratic and linear behavior of $H2_{open}$ and $rel.err$ respectively (Figures \ref{fig:mv_H2} and \ref{fig:mv_rel_err}). We omit the column $H1\times rel.err$ because they are almost equal for all schemes considered in this work. Clearly, MaxVar gains better privacy-utility tradeoffs than $(k,\epsilon)$-obf, but worse than \textit{RandWalk}, \textit{RandWalk-mod}. However, MaxVar has its own merit as a gap-filling solution. Figures \ref{fig:tradeoff-dblp},\ref{fig:tradeoff-amazon} and \ref{fig:tradeoff-youtube} show that while RandWalk, RandWalk-mod have the best tradeoffs, they suffer from high lower bounds for utility. In other words, if the dataset allows higher privacy risk for better utility (lower rel.err) then the usage of two random walk based solutions may be limited. The simple solution \textit{Mixture} also fills the gap. We omit EdgeSwitch due to its worst tradeoffs.

In addition to the re-identification scores $H1$ and $H2_{open}$, we also compute $\epsilon$ for $k \in \{30, 50,100\}$ to have a fair comparison with $(k,\epsilon)$-obf. Table \ref{tab:compare} shows that \textit{MaxVar} has the best $(k,\epsilon)$ scores. The number of potential edges used in MaxVar could be 20\% of $|E_{G_0}|$, much less than that of $(k,\epsilon)$-obf (100\% for $c = 2$ \cite{boldi2012injecting}). \textit{MaxVar} and \textit{RandWalk/RandWalk-mod} have $|E_{G_0}\setminus E_G| \simeq |E_G\setminus E_{G_0}|$ and these two quantities are higher than those of $(k,\epsilon)$-obf where the number of edges is preserved only at small $\sigma$. RandWalk and RandWalk-mod do not have many edges preserved due to their rewiring nature. $|E_{G_0}\setminus E_G|$ increases slowly in MaxVar because the edges in $G_0$ always have positive probabilities. Fig \ref{fig:total-variance} compares the normalized total variance (i.e. divided by $|E_{G_0}|$) of three schemes. Again, MaxVar is between $(k,\epsilon)$-obf and RandWalk-mod.

\begin{table*}[htb]
\small
\centering
\caption{\textit{MaxVar} vs. \textit{$(k,\epsilon)$-obf}, \textit{RandWalk} and \textit{RandWalk-mod} (lower tradeoff is better)} \label{tab:compare} 
\begin{tabular}{r||r|r||r|r||r|r|r||r|r}
\hline
 & \multicolumn{7}{c}{PRIVACY} & \multicolumn{2}{c}{UTILITY} \\
\hline
graph & $H1$& $H2_{open}$& $|E_{G_0}\setminus E_G|$ & $|E_G\setminus E_{G_0}|$& $\epsilon (k=30)$& $\epsilon (k=50)$& $\epsilon (k=100)$& rel.err & tradeoff \\
\hline
\texttt{dblp} & 199 & 125302 &  &  & 0.00238 & 0.00393 & 0.00694 & & \\
\hline
$\sigma = 0.001$ & 72.9 & 40712.1 & 6993.0 & 5280.2 & 0.00039 & 0.00122 & 0.00435 &  0.018 & \textbf{3.61}\\
$\sigma = 0.01$ & 41.1 & 24618.2 & 19317.3 & 5444.9 & 0.00051 & 0.00062 & 0.00082 & 0.077 & \textbf{12.03}\\
$\sigma = 0.1$ & 19.7 & 7771.4 & 65285.1 & 6916.8 & 0.00179 & 0.00199 & 0.00245 & 0.128 & \textbf{11.33}\\
\hline
(MV) $n_p=200k$ & 59.7 & 3257.2 & 94508.0 & 94416.5 & 0.00033 & 0.00077 & 0.00152 & 0.017 & \textbf{0.99}\\
(MV) $n_p=600k$ & 32.1 & 325.7 & 246155.6 & 246355.3 & 0.00017 & 0.00029 & 0.00085 & 0.045 & \textbf{0.82}\\
\hline
(RW) $t=2$ & 10.0 &	4.9 & 615966.2 & 567352.2 & 0.00318 & 0.00439 & 0.00789 & 0.094 & \textbf{0.21}\\
(RW) $t=5$ & 11.7 &	5.6 & 754178.1 & 753796.3 & 0.00271 & 0.00386 & 0.00689 & 0.142 & \textbf{0.34}\\
\hline
(RW-mod) $t=2$ & 11.8 & 4.5 & 719361.2 & 719416.5 & 0.00073 & 0.00135 & 0.00252 & 0.109 & \textbf{0.23}\\
(RW-mod) $t=5$ &  12.0 & 5.4 & 784872.3 & 784786.8 & 0.00057 & 0.00113 & 0.00228 & 0.131 & \textbf{0.30}\\
\hline
\hline
\texttt{amazon}  & 153 & 113338 &  &  & 0.00151 & 0.00218 & 0.00456 & & \\
\hline
$\sigma = 0.001$ & 55.7 & 55655.9 & 6158.9 & 4607.4 & 0.00048 & 0.00119 &  0.00293 & 0.065 & \textbf{13.40}\\
$\sigma = 0.01$ & 34.5 & 39689.8 & 14962.0 & 4801.3 & 0.00038 & 0.00052 & 0.00066  & 0.114 & \textbf{21.33}\\
$\sigma = 0.1$ & 19.2 & 16375.4 & 39382.6 & 5650.3 & 0.00068 & 0.00102 & 0.00190  & 0.145 & \textbf{18.46}\\
\hline
(MV) $n_p=200k$ & 30.2 & 2209.1 &  104800.9 & 104759.9 & 0.00023 & 0.00032 & 0.00065  & 0.022 & \textbf{1.03}\\
(MV) $n_p=600k$ & 17.8 & 188.4 &266603.7 & 266533.7 & 0.00015 & 0.00023 & 0.00047 & 0.066 & \textbf{0.91}\\
\hline
(RW) $t=2$ &  5.7 & 5.4 & 649001.0 & 585025.5  & 0.00213 & 0.00338 & 0.00550 & 0.180 & \textbf{0.42}\\
(RW) $t=5$ & 10.4 & 8.6 & 629961.8 & 629274.9  & 0.00146 & 0.00239 & 0.00423 & 0.181 & \textbf{0.53}\\
\hline
(RW-mod) $t=2$ & 9.8 & 3.2 &  725440.1 & 725239.9 & 0.00048 & 0.00073 & 0.00133 & 0.139 & \textbf{0.25}\\
(RW-mod) $t=5$ & 9.7 & 6.0 &  671694.2 & 671985.4 & 0.00038 & 0.00058 & 0.00137 &  0.185 & \textbf{0.45}\\
\hline
\hline
\texttt{youtube} & 978 & 321724 &   &  & 0.00291 & 0.00402 & 0.00583  & & \\
\hline
$\sigma = 0.001$ & 157.2 & 36744.6 & 19678.5 & 15028.5 & 0.00143 & 0.00232 & 0.00421 & 0.022 & \textbf{4.28}\\
$\sigma = 0.01$ & 80.0 & 22361.7 & 62228.6 & 14914.3 & 0.00060 & 0.00105 &  0.00232 & 0.043 & \textbf{6.38}\\
$\sigma = 0.1$ & 23.4 & 5806.9 & 378566.0 & 15007.5 & 0.00038 & 0.00052 & 0.00074 & 0.160 & \textbf{12.20}\\
\hline
(MV) $n_p=600k$ & 114.4 & 4428.8 &  213097.3 & 213371.4 & 0.00047 & 0.00063 & 0.00108 & 0.030 & \textbf{2.00}\\
(MV) $n_p=1800k$ & 71.4 & 814.4 &  521709.9 & 521791.6 & 0.00040 & 0.00052 & 0.00090 & 0.049 & \textbf{1.38}\\
\hline
(RW) $t=2$ & 13.4 & 1.5 &  2836169.3 & 2485053.4  & 0.00319 & 0.00425 & 0.00623 & 0.403 & \textbf{0.50}\\
(RW) $t=5$ & 24.6 & 8.4 &  2468068.6 & 2466411.3  & 0.00304 & 0.00408 & 0.00598 & 0.120 & \textbf{0.35}\\
\hline
(RW-mod) $t=2$ & 26.4 & 1.4 & 2863112.1 & 2862716.3 & 0.00159 & 0.00226 & 0.00355 & 0.245 & \textbf{0.29}\\
(RW-mod) $t=5$ & 26.1 & 11.0 & 2467414.9 & 2467269.5 & 0.00153 & 0.00322 & 0.00159 & 0.090 & \textbf{0.30}\\
\hline
\end{tabular}
\end{table*}

\section{Conclusion}
We provide a generalized view of graph anonymization based on the semantics of edge uncertainty. Via the model of uncertain adjacency matrix with the constraint of unchanged expected degree for all nodes, we analyze recently proposed schemes and explain why there exists a gap between them by comparing the total degree variance. We propose \textit{MaxVar}, a novel anonymization scheme exploiting two key observations: maximizing the total degree variance while keeping the expected degrees of all nodes unchanged and using nearby potential edges. We also investigate an elegant \textit{Mixture} approach that together with \textit{MaxVar} fill the gap between \textit{$(k,e)$-obf} and \textit{RandWalk}. Furthermore, we promote the usage of incorrectness measure for privacy assessment in a new quantifying framework rather than Shannon entropy and min-entropy (k-anonymity). The experiments demonstrate the effectiveness of our methods. Our work may incite several directions for future research including (1) novel constructions of uncertain graphs based on the uncertain adjacency matrix (2) deeper analysis on the privacy-utility relationships in MaxVar (e.g. explaining the near linear and near quadratic curves) (3) study on directed and bipartite graphs.

% trigger a \newpage just before the given reference
% number - used to balance the columns on the last page
% adjust value as needed - may need to be readjusted if
% the document is modified later
%\IEEEtriggeratref{8}
% The "triggered" command can be changed if desired:
%\IEEEtriggercmd{\enlargethispage{-5in}}

% references section

% can use a bibliography generated by BibTeX as a .bbl file
% BibTeX documentation can be easily obtained at:
% http://www.ctan.org/tex-archive/biblio/bibtex/contrib/doc/
% The IEEEtran BibTeX style support page is at:
% http://www.michaelshell.org/tex/ieeetran/bibtex/
%\bibliographystyle{IEEEtran}
% argument is your BibTeX string definitions and bibliography database(s)
%\bibliography{IEEEabrv,../bib/paper}
%
% <OR> manually copy in the resultant .bbl file
% set second argument of \begin to the number of references
% (used to reserve space for the reference number labels box)

% %
\bibliographystyle{abbrv}
\bibliography{uncertain-graph-short}

% APPENDIX
\appendix
\section{Proofs}

\subsection{Proof of theorem \ref{theorem:limit}}
\label{proof:limit}

\begin{proof}
For power-law graphs, the node degree distribution is $P(k)=\frac{k^{-\gamma}}{\zeta(\gamma)}$. The number of selfloops $n_{sl}^{PL}$ in $B_{RW}^{\infty}$ is the sum of elements on the main diagonal.
\begin{multline*}
n_{sl}^{PL} = \frac{1}{2m}\sum_{i=1}^{n} d_i^2 = \frac{1}{2m}\sum_{k=1}^{\infty} k^2nP(k) = \frac{n}{nE(k)}\sum_{k=1}^{\infty} k^2P(k) \\
=  \frac{1}{E(k)}\sum_{k=1}^{\infty} \frac{k^{-(\gamma-2)}}{\zeta(\gamma)} = \frac{\zeta(\gamma)}{\zeta(\gamma-1)} \frac{\zeta(\gamma-2)}{\zeta(\gamma)} = \frac{\zeta(\gamma-2)}{\zeta(\gamma-1)} \nonumber
\end{multline*}
To prove that there is no multiedge in $B_{RW}^{\infty}$ we show that all elements in $B_{RW}^{\infty}$ are less than 1. This is equivalent to show $d_{max} < \sqrt{2m}$. We use the constraint that the number of nodes with degree $d_{max}$ must be at least 1, i.e. $\frac{n d_{max}^{-\gamma}}{\zeta(\gamma)} \geq 1 \leftrightarrow d_{max} \leq (n/\zeta(\gamma))^{1/\gamma}$. Because $\zeta(\gamma) > 1$ and we consider $\gamma > 2$ in social networks, $(n/\zeta(\gamma))^{1/\gamma} < \sqrt{n}$. Meanwhile, $\sqrt{2m} = \sqrt{n\frac{\zeta(\gamma - 1)}{\zeta(\gamma)}} > \sqrt{n}$ due to the fact that $\zeta(\gamma)$ is monotonically decreasing. So we conclude $n_{me}^{PL} = 0$.

For sparse ER random graphs, we have $P(k) \rightarrow e^{-\lambda} \frac{\lambda^k}{k!}$. The number of selfloops $n_{sl}^{ER}$ is
\begin{multline*}
n_{sl}^{ER} = \frac{1}{2m}\sum_{i=1}^{n} d_i^2 = \frac{1}{2m}\sum_{k=1}^{\infty} k^2nP(k) = \frac{n}{nE(k)}\sum_{k=1}^{\infty} k^2P(k) \\
= \frac{1}{\lambda}E(k^2) = \frac{1}{\lambda}(E(k)^2 + Var(k)) = \frac{1}{\lambda} (\lambda^2 + \lambda) = \lambda + 1
\end{multline*}
Similar to the case of PL graphs, we show that $d_{max} < \sqrt{2m}$ where $d_{max} = \max_k n e^{-\lambda}\frac{\lambda^k}{k!} \geq 1 = \max_k \frac{k!}{\lambda^k} \leq  n e^{-\lambda}$. Using the basic facts $k^{k/2} \leq k!$ and $k > \lambda$ we get $\frac{k^{k/2}}{\lambda^k} \leq n e^{-\lambda} < n$, so $k < n^{2/k}\lambda^2 < \sqrt{n\lambda} = \sqrt{2m}$ as long as $n$ is sufficiently large and $\lambda \geq 4$. So we conclude $n_{me}^{ER} = 0$.
\end{proof}

\subsection{Proof of theorem \ref{theorem:edit-distance}}
\label{proof:edit-distance}

\begin{proof}
We prove the result by induction. 

When $k=1$, we have two cases of $G_1$: $E_{G_1}=\{e_1\}$ and $E_{G_1}=\emptyset$. For both cases, $Var[D(\mathcal{G}_1,G_1)] = p_1(1-p_1)$, i.e. independent of $G_1$.

Assume that the result is correct up to $k-1$ edges, i.e. $Var[D(\mathcal{G}_{k-1},G_{k-1})] = \sum_{i=1}^{k-1} p_i(1-p_i)$ for all $G_{k-1} \sqsubseteq \mathcal{G}_{k-1}$, we need to prove that it is also correct for $k$ edges. We use the subscript notations $\mathcal{G}_k, G_k$ for the case of $k$ edges. We consider two cases of $G_k$: $e_k \in G_k$ and $e_k \notin G_k$.

\textit{Case 1.} The formula for $Var[D(\mathcal{G}_k, G_k)]$ is
\begin{multline*}
Var[D(\mathcal{G}_k,G_k)] = \sum_{G'_k \sqsubseteq \mathcal{G}_k} Pr(G'_k) [D(G'_k,G_k) - E[D(\mathcal{G}_k,G_k)]]^2\nonumber \\
=\sum_{e_k \in G'_k} Pr(G'_k) [Dk - E[D_k]]^2 + \sum_{e_k \notin G'_k} Pr(G'_k) [D_k - E[D_k]]^2 \nonumber
\end{multline*}

The first sum is $\sum_{G'_{k-1} \sqsubseteq \mathcal{G}_{k-1}} p_k Pr(G'_{k-1})[D_{k-1} - E[D_{k-1}] - (1-p_k)]^2$.

The second sum is $\sum_{G'_{k-1} \sqsubseteq \mathcal{G}_{k-1}} (1-p_k) Pr(G'_{k-1})[D_{k-1} - E[D_{k-1}] + p_k)]^2$.

Here we use shortened notations $D_k$ for $D(G'_k,G_k)$ and $E[D_k]$ for $E[D(\mathcal{G}_k,G_k)]$.

By simple algebra, we have $Var[D(\mathcal{G}_k,G_k)] = Var[D(\mathcal{G}_{k-1},G_{k-1})] + q_k(1-q_k) = \sum_{i=1}^{k} p_i(1-p_i)$. 

\textit{Case 2.} similar to the Case 1.
\end{proof}

%%
%\subsection{Proof of proposition \ref{propos:randwalk-var}}
%\label{proof:randwalk-var}
%\begin{proof} Proof for $TV_{RW}(t)$'s upper bound is obtained in the same way as in the proof of Proposition \ref{propos:max-var}.
%
%At $t=\infty$, the computations of $TV_{RW}^{PL}(\infty)$ and $TV_{RW}^{ER}(\infty)$ are similar to those in the proof of Theorem \ref{theorem:limit}.
%\begin{multline*}
%TV_{RW}^{PL}(\infty) = \frac{1}{2}\sum_{i,j=1}^{n} \frac{d_id_j}{2m}(1-\frac{d_id_j}{2m}) = m - \frac{1}{8m^2}\sum_{i,j=1}^{n} d_i^2d_j^2 \\
%= m - \frac{1}{2E^2(k)}\left(\sum_{k=1}^{\infty}k^2P(k) \right)^2 = m - \frac{\zeta(\gamma)^2}{2\zeta(\gamma-1)^2} \frac{\zeta(\gamma-2)^2}{\zeta(\gamma)^2} \\
%= m-\frac{1}{2}\left[\frac{\zeta(\gamma-2)}{\zeta(\gamma-1)}\right]^2
%\end{multline*}
%\begin{multline*}
%TV_{RW}^{ER}(\infty) = m - \frac{1}{2E^2(k)}\left(\sum_{k=1}^{\infty}k^2P(k) \right)^2 = m - \frac{1}{2\lambda^2} (\lambda^2 + \lambda)^2 \nonumber\\
%= m-\frac{1}{2}(\lambda+1)^2 \nonumber
%\end{multline*}
%\end{proof}

% that's all folks
\end{document}